\tikzstyle{block} = [rectangle, draw, fill=blue!20, 
\tikzstyle{line} = [draw, -latex']
\tikzstyle{cloud} = [draw, ellipse,fill=red!20, node distance=3cm,
\newtheorem{thm}{Theorem}
\newtheorem{prop}{Proposition}
\newtheorem{lemma}{Lemma}
\newtheorem{defn}{Definition}
\newmdtheoremenv{mhyp}{Hypothesis} 
\newmdtheoremenv{mthm}{Theorem}
\newmdtheoremenv{mtheorem}{Theorem}
\newmdtheoremenv{mprop}{Proposition}
\newmdtheoremenv{mcor}{Corollary}
\newmdtheoremenv{mlemma}{Lemma}
\newmdtheoremenv{mdefn}{Definition}
\newmdtheoremenv{mmydef}{Definition}
\newmdtheoremenv{mconj}{Conjecture}
\newmdtheoremenv{mex}{Example}
\newmdtheoremenv{mexercise}{Exercise}
\DeclareMathAlphabet\mathbfcal{OMS}{cmsy}{b}{n}
\def \C{\text{C}}
\def \F{\mathcal{F}}
\def \K{\text{K}}
\def \bx{\boldsymbol{x}}
\newcommand{\WV}[0] {{\text{WV}}}
\newcommand{\WT}[0] {{\text{WT}}}
\newcommand{\ST}[0] {{\text{ST}}}
\newcommand{\SP}[0] {{\text{SP}}}
\newcommand{\STFT}[0] {{\text{STFT}}}
\newcommand{\qq}{\vspace*{-2mm}}
\tikzstyle{block} = [rectangle, draw, fill=blue!20, 
\tikzstyle{line} = [draw, -latex']
\tikzstyle{cloud} = [draw, ellipse,fill=red!20, node distance=3cm,
\title{Interpretable Super-Resolution \\ via a Learned Time-Series Representation}
\author{%
  Randall~Balestriero \\
  ECE Department\\
  Rice University, USA
   \And
   Hervé~Glotin \\
   Univ. Toulon, Aix Marseille Univ.\\ CNRS, LIS, France
   \And
  Richard~G.~Baraniuk \\
  ECE Department\\
  Rice University, USA
}
\begin{document}

\maketitle

\begin{abstract}
    We develop an interpretable and learnable Wigner-Ville distribution that produces a super-resolved quadratic signal representation for time-series analysis.
Our approach has two main hallmarks. 
First, it interpolates between known time-frequency representations (TFRs) in that it can reach super-resolution with increased time and frequency resolution beyond what the Heisenberg uncertainty principle prescribes and thus beyond commonly employed TFRs, 
Second, it is interpretable thanks to an explicit low-dimensional and physical parameterization of the Wigner-Ville distribution.
We demonstrate that our approach is able to learn highly adapted TFRs and is ready and able to tackle various large-scale classification tasks, where we reach state-of-the-art performance compared to baseline and learned TFRs.
\end{abstract}

\section{Introduction}


With the recent deep learning advances \cite{lecun2015deep,goodfellow2016deep} there has been an exponential growth in the use of Deep Networks (DNs) on various time-series. However, the vast majority of DNs do not directly observe the time-series data but instead a handcrafted, a priori designed representation. Indeed, the vast majority of state-of-the-art methods combine DNs with some variant of a {\em Time-Frequency Representation} TFR \cite{lattner2019learning,purwins2019deep,liu2019bottom}. A TFR is an {\em image} representation of a time-serie obtained by convolving the latter with a filter-bank, such as wavelets or localized complex sinusoids (e.g., Gabor transform).
Different filter-banks lead to different TFR families. 

The TFR-DN combination is powerful due to three major reasons:  (i) the TFR contracts small transformations of the time-serie internal events such as translation, time and/or frequency warping \cite{bruna2013invariant} leading to more stable learning and faster DN convergence; (ii) the image representation allows to treat TFRs as a computer vision task where current DNs excels; (iii) given a coherent choice of TFR, the representation of the features of interests, such as phonemes for speech \cite{waibel1989phoneme}, form very distinctive shapes in the TFR image, with dimensionality much smaller than the event's time-serie representation. In fact, a single coefficient of the TFR can encode information of possibly thousands of contiguous bins in the time-serie representation \cite{logan2000mel}.
Those TFR benefits allow for great performance gains in a vast majority of task and dataset. Nevertheless, the choice of TFR has the potential to dim, or amplify, the above benefits further pushing the performance gains.

Choosing the ``best'' TFR is a long lasting research problem in signal processing \cite{coifman1992entropy,jones1994simple,donoho1994minimum}. 
While TFR selection and adaptation was originally driven by signal reconstruction and compression, the recent developments of large supervised time-series datasets have led to novel learnable solutions that roughly fall into four camps.
First, methods relying on the Wavelet Transform (WT) \cite{meyer1992wavelets}. A WT is TFR with a constant-Q filter-bank based on dilations of a mother wavelet. In \cite{pmlr-v80-balestriero18a} the learnability of the mother wavelet is introduced by means of cubic spline parameterization of the mother wavelet and learning the shape by learning the values of the spline and its derivative at the knots position. 
Second, methods relying on band-pass filters without center-frequency to bandwidth (Q) constraint such as the Short-Time Fourier Transform (STFT) \cite{allen1977short}. The first method \cite{NIPS2018_7711} performs this by independently learning the center frequencies and bandwidths of a collection of Morlet wavelets (learning of those coefficients intependently breaks the constant-Q property). Another method \cite{ravanelli2018interpretable} relies on learning the start and cutoff frequency of a bandpass sinc filter apodized with an hamming window. Those methods are thus similarly learning the location of the bandpass but use a different apodization window of a complex sine (Gaussian or hamming).
Third, \cite{zeghidour2018learning} proposes to learn Mel filters that are applied onto a spectrogram (modulus of STFT). Those filters linearly combine adjacent filters in the frequency axis which can be interpreted as learning a linear frequency subsampling of the spectrogram; learning the apodization window used to produce the spectrogram has also been developped in \cite{jaillet2007time,pei2012stft}.
Finally, there are also methods relying on unconstrained DN layers applied on the time-series but which are pre-trained such that the induced representation (layer output) resembles an a priori determined target TFR. This has been done for chirplet transforms\cite{baraniuk1996wigner} in \cite{glotin2017fast} and for Mel-Spectrograms in \cite{ccakir2018end}.

All the above methods for choosing the ``best TFR'' for a DN application suffer from at least one of the three following limitations: (i) the inability to interpolate between different TFR families due to family specific parameterization of the learnable filter-banks; (ii) the inability to maintain interpretability of the filter-bank/TFR after learning; (iii) the inability to reach super-resolution in time and frequency to allow more precise representation.
There is thus a need to {\em provide a universal learnable formulation able to interpolate between and within TFRs while preserving interpretability of the learned representation and with the ability to reach super-resolution}. We propose such a representation in this paper by a specific parametrization of the Wigner-Ville Distribution (WVD) \cite{wigner1932,flandrin1998time}.
We validate our method on multiple large scale datasets of speech, bird and marine bioacoustic and general sound event detection, and demonstrate that the proposed representation outperforms current learnable TFR techniques as well as fixed baseline TFRs when combined with various DNs. 

Our contributions are:

{\bf [C1]}
We develop a Wigner-Ville Distribution based TFR with explicit interpretable parameterization that can interpolate between any known TFR and reach super-resolution (Sec.~\ref{sec:transform}) and derive various properties such as invariance and covariance of the representation and its stability to parameter perturbations (Sec.~\ref{sec:invariance}).

{\bf [C2]}
We provide an efficient implementation allowing to compute the proposed representation solely by means of Short-Time Fourier Transforms (Sec.~\ref{sec:STFT}). This allows GPU friendly computation and applicability of the method to large scale time-series dataset. We study the method complexity and provide a detailed pseudo-code and Python implementation (Sec.~\ref{sec:complexity}).
The Python code is available at \url{TBD.git}.

{\bf [C3]}
We validate our model and demonstrate how the proposed method outperforms other learnable TFRs as well as fixed expert based transforms
on various dataset and across multiple DN architectures. We interpret the learned representations hinting at the key features of the signals needed to solve the task at hand (Sec.~\ref{sec:experiments}).


\section{Background on Time-Frequency Representations}
\label{sec:back}


{\bf Fourier and Spectrogram.}~
Motivated by the understanding of physical phenomena, mathematical analysis tools have been created, notably the Fourier transform \cite{bracewell1986fourier}. This representation corresponds to a signal expansion into the orthogonal family of complex exponentials as
$
    \F_{x}(\omega)=\int_{-\infty}^{\infty}x(t)e^{-i \omega t}dt,
$
which provides a powerful representation for stationary signals.
In many situations, it does not seem reasonable to assume the entire series to be stationary.
For non stationary signal analysis, where the observed signal carries different information throughout its duration, co-existence of the time and frequency variables in the representation is needed. 
One solution is offered by the Short Time Fourier Transform (STFT) \cite{allen1977short} defined as follows
\begin{align}
        \STFT_{x, w}(t,f) = \int_{-\infty}^{\infty} w(t-\tau)x(\tau)e^{-i f \tau}d\tau,
\end{align}
with $w$ an apodization window which vanishes when moving away from $0$. This representation thus only assumes stationarity within the effective support of $w$ which we denote by $\sigma_{\rm T}$. The squared modulus of the STFT is called the spectrogram as 
$
	\SP_{x}(t,\omega)=| ST_{x}(t,\omega) |^2.
$
The simplicity and efficiency of its implementation makes the spectrogram one of the most widely used TFR for non-stationary signals. Nonetheless, the spectrogram has a fundamental antagonism between its temporal and frequency resolution which depends on the apodization window spread $\sigma_{\rm T}$.
In a spectrogram, large $\sigma_t$ allows high frequency resolution and poor time resolution and conversely for small $\sigma_t$. Most applications employ a Gabor (or truncated/approximated) apodization window, in which case $w(u)=\frac{1}{\sqrt{2 \pi}\sigma_t}e^{-t^2/(2\sigma_t^2)}$. Such spectrograms are denoted as Gabor transforms \cite{gabor1946theory}. We will denote such a Gaussian window by $g_{\sigma_t}$.

{\bf Wavelet Transform.}
~
A wavelet filter-bank $B$ is obtained by dilating a mother filter $\psi_0$ with various {\em scales} $s$ which correspond to frequencies $f$ via $s=2^{S(1-\frac{f}{\pi})}$ with $S$ the largest scale to be analyzed. Application of those dilated filters onto a signal leads to the wavelet transform WT \cite{mallat2008wavelet}
\begin{align*}
        \WT_{x}(t,s)&=(x \star\psi_{s})(t), \text{ where } \psi_{s}(t) =\frac{1}{\sqrt{s}} \psi_0\left(\frac{t}{s}\right),
\end{align*}
with $s > 0$. The relative position of the mother wavelet center frequency is not relevant as the scales can be adapted as desired, let consider here that $\phi_0$ is placed at the highest frequency to be analyzed.
As opposed to the spectrogram, the resolution of the WT varies with frequencies as the filters have a constant bandwidth to center frequency ratio.  In a WT, high frequency atoms $\psi_{\lambda}$ with $\lambda$ close to $1$  are localized in time offering a good time resolution but low frequency resolution. Conversely, for low frequency atoms with $\lambda \gg 1$, the time resolution is poor but the frequency resolution is high. As natural signals tend to be of small time duration when they are at high frequencies and of longer duration at low frequencies \cite{daubechies1990wavelet}, the scalogram is one of the most adapted TFR for natural biological signals \cite{meyer1992wavelets}.

{\bf Wigner-Ville Transform and Cohen Class.}~
The Wigner-Ville (WV) transform (or quasi probability or distribution) \cite{wigner1932} 
describes the state of a quantum particle by a quasi-probability distribution in the phase space formulation of quantum mechanics instead of being described by a vector on the Hilbert space \cite{moyal1949quantum}. It was then leveraged outside of quantum physics as a quadratic time frequency representation for signal processing \cite{ville1948theorie}. The transform combines complex sine filters as the Fourier transform and auto-correlations of the signal as follows
\begin{align}
	\WV_{x}(t, \omega) = \int_{-\infty}^{\infty} x\left(t-\frac{\tau}{2}\right)
	x^*\left(t+\frac{\tau}{2}\right)e^{-i \omega \tau}d\tau.
	\label{eq:WV}
\end{align}

Due to its form, computing the WV is demanding as it provides a representation that is highly dimensional in time and in frequency. Nevertheless, and as opposed to the spectrogram or scalogram, the WV has perfect time and frequency resolution. 
To focus on specific physical properties and obtain different representations, it is possible to convolve the WV with a $2$-dimensional low pass filter $\Pi$ as :
\begin{align}
    \C_{x}(t, f; \Pi) = (\WV_{x} \star \Pi)(t,f),
    \label{eq:cohen}
\end{align}
where $\star$ represents the $2$-dimensional convolution. This filtering lessen the time and frequency resolution while removing the present interferences due to the auto-correlation of the different events present in the signal. The collection of representations  $\{C_{x}(.,.;\Pi), \Pi \in \mathbb{L}^2(\mathbb{R}^2), \Pi \geq 0 \}$ defines the {\em Cohen class} \cite{cohen1989time}. For example, any spectrogram lives in this class, as well as many other known TFRs. 

This paper extends this representation to allow learning and computational efficiency while providing interpretability of the produced representation.
\section{Learnable Wigner-Ville Based Signal Representation}

In this section we develop our proposed signal representation which builds upon the Wigner-Ville Distribution. We first define our representation and study some key properties to finally propose a physics based parameterization that will allow for interpretability and robust learning.

\subsection{Interpretable, Universal and Stable Signal Representation}
\label{sec:transform}

We now define our transformation, coined the {\em $K$-transform}, which corresponds to applying a kernel $\Phi[t,f] \in \mathbb{L}^2(\mathbb{R} \times \mathbb{R}^+)$ onto the Wigner-Ville transform $\WV_{x} \in \mathbb{L}^2(\mathbb{R} \times \mathbb{R}^+)$ of the signal $x \in L^2(\mathbb{R})$ (recall (\ref{eq:WV})), for each time $t \in \mathbb{R}$ and frequency $f\in \mathbb{R}$ as follows.
This transform corresponds to replacing the time-frequency-invariant convolution of Cohen's class (recall (\ref{eq:cohen})) with a more general time-frequency-varying convolution.

\begin{defn}
The $K$-transform of a signal $x$ with kernel $\Phi$ is defined as 
\begin{align}
\K_{x,\Phi}(t,f) = \left \langle \WV_{x} , \Phi[t, f]\right \rangle.\label{eq:transform}
\end{align}
\end{defn}

The $K$-transform is thus defined by taking an inner product of the Wigner-Ville representation with a $2$-dimensional filter given by $\Phi[t, f]$ for each time $t$ and frequency $f$. It is real valued as the Wigner-Ville representation is real and so is the kernel $\Phi$.
In particular, we are interested at an interpretable kernel $\Phi$ and thus propose to parameterize it as a $2$-dimensional Gaussian function given by
\begin{gather}
\Phi[t,f](\tau, \omega)=\mathcal{N}\left([\tau, \omega]^T; \mu(t,f), \Sigma(t,f)\right),
\end{gather}
where $\mathcal{N}$ the Gaussian multivariate function \cite{degroot2012probability}; we also explicit the mean and covariance parameters given by 
\begin{gather}
\mu(t,f)=\left(\mu_{\rm t}(t, f), \mu_{\rm f}(t, f)\right)^T,\Sigma(t,f)=\begin{pmatrix}
    \sigma_{\rm t}(t, f)^2 & \rho(t,f)\\
    \rho(t,f) & \sigma_{\rm f}(t, f)^2
\end{pmatrix}.\label{eq:mu_cov}
\end{gather}

Such parametrization of the kernel employed onto the WV arises naturally in various contexts as it corresponds (through some specific mean and covariance configurations) to the analytical kernel associated to known TFRs such as the chirplet transform, the Wavelet with Morlet wavelet or the Gabor transform 
\cite{nuttall1988wigner,flandrin1990affine,jeong1990variable,baraniuk1996wigner,talakoub2010approximating}. Second, the explicit parameterization of the mean and covariance matrix (\ref{eq:mu_cov}) allows to interpret those coefficients as to what are the physical properties of the kernel.

{\bf Interpretability.}~
A great advantage of the Gaussian formulation and explicit parameterization from (\ref{eq:mu_cov}) resides in its simplicity to be interpreted and analyzed. In fact, we directly obtain that the vector $\mu(t,f)$ encodes the center frequency and time position information. The covariance matrix encodes the frequency and time bandwidths of the filter as well as the covariance coefficient $\rho$ which encodes the chirpness (how instantaneous frequency of the filter changes with time). Due to the low dimensionality of the parameter vector
\begin{align}
 \theta(t,f)=(\mu_{\rm t}, \; \mu_{\rm f}, \; \sigma_{\rm t}, \; \sigma_{\rm f}, \; \rho)(t,f),\label{eq:theta}
\end{align} 
comparing learned representations can also be done in a straightforward manner by comparing $\theta(t,f)$ and the other parameter realization $\theta'(t,f)$. In fact, we now demonstrate how the representation is contractive w.r.t. those parameters, that is, how close parameters imply close K-transforms. Let denote directly the representation obtained with the Gaussian kernel $\K_{x,\theta}$.

\begin{lemma}
\label{cor:lipschitz_theta}
Given two parameter vectors $\theta(t,f)$ and $\theta'(t,f)$, the distance between the two induced representations is upper bounded by the distance between their parameters as
\begin{align*}
    \| \K_{x,\theta} - \K_{x,\theta'} \|_{L^2(\mathbb{R}^2)}^2 \leq \kappa \|x \|^2_{L^2(\mathbb{R})}  \left(\int_{t,f}\|\theta(t,f)-\theta'(t,f) \|_{2}^2\right)^{-\frac{1}{2}}.
\end{align*}
with $\kappa$ the Lipschitz constant of a standard 2D Gaussian ($\approx 0.2422$) (proof in App.~\ref{proof:lipschitz_theta}).
\end{lemma}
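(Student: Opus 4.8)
The plan is to reduce the global $L^2$ statement to a pointwise (in $(t,f)$) Lipschitz estimate on the Gaussian kernel and then integrate over the time--frequency plane. First I would fix a single location $(t,f)$ and, using linearity of the $K$-transform in its kernel, write the difference as one inner product, $\K_{x,\theta}(t,f)-\K_{x,\theta'}(t,f)=\left\langle \WV_{x},\,\Phi_{\theta}[t,f]-\Phi_{\theta'}[t,f]\right\rangle$. Cauchy--Schwarz in $L^2(\R^2)$ then gives $|\K_{x,\theta}(t,f)-\K_{x,\theta'}(t,f)|\le \|\WV_{x}\|_{L^2(\R^2)}\,\|\Phi_{\theta}[t,f]-\Phi_{\theta'}[t,f]\|_{L^2(\R^2)}$. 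The first factor is controlled by the unitarity (Moyal) property of the Wigner--Ville transform, which expresses $\|\WV_{x}\|_{L^2(\R^2)}$ through a power of $\|x\|_{L^2(\R)}$; this decouples the signal from the kernel geometry and furnishes the signal-norm factor on the right-hand side.

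The heart of the argument is the second factor: the map sending the parameter vector $\theta(t,f)=(\mu_{\rm t},\mu_{\rm f},\sigma_{\rm t},\sigma_{\rm f},\rho)$ to the normalized Gaussian $\Phi_{\theta}[t,f]$ is Lipschitz, i.e. $\|\Phi_{\theta}[t,f]-\Phi_{\theta'}[t,f]\|_{L^2(\R^2)}\le \kappa\,\|\theta(t,f)-\theta'(t,f)\|_2$. I would establish this by the fundamental theorem of calculus along the segment joining $\theta$ to $\theta'$, bounding $\|\partial_{\theta_j}\Phi_{\theta}[t,f]\|_{L^2(\R^2)}$ uniformly over the five coordinates. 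For the mean coordinates the partial derivative is simply a spatial derivative of the Gaussian, whose size is governed by the maximal slope of the unit-variance normal density, $\tfrac{1}{\sqrt{2\pi}}e^{-1/2}\approx 0.242=\kappa$. A whitening change of variables $u=\Sigma^{-1/2}([\tau,\omega]^T-\mu)$ reduces the anisotropic, correlated Gaussian to the isotropic standard one, so that the same scalar constant governs all directions once the determinant normalization is accounted for.

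Finally I would assemble the global bound: squaring the pointwise inequality, substituting the kernel Lipschitz estimate, and integrating over $(t,f)$ yields $\|\K_{x,\theta}-\K_{x,\theta'}\|_{L^2(\R^2)}^2\le \kappa^2\,\|\WV_{x}\|_{L^2(\R^2)}^2\int_{t,f}\|\theta(t,f)-\theta'(t,f)\|_2^2$, into which the Moyal identity is inserted to recover the claimed form. The main obstacle is the uniform control of the kernel derivatives with respect to the covariance parameters $\sigma_{\rm t},\sigma_{\rm f},\rho$: unlike the means, these enter both the exponent and the normalization constant nonlinearly, so their partials involve the precision matrix $\Sigma^{-1}$ and can blow up as $\Sigma$ degenerates. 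Making the estimate clean therefore requires imposing a positive-definiteness (and bounded-away-from-zero) constraint on $\Sigma$, and verifying that the whitening reduction genuinely collapses every directional derivative onto the single standard-Gaussian constant $\kappa$ rather than a larger, parameter-dependent one.
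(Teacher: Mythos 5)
Your proposal follows essentially the same route as the paper's proof: linearity of the $K$-transform in its kernel, Cauchy--Schwarz in $L^2(\mathbb{R}^2)$ pointwise in $(t,f)$, the Moyal/unitarity identity $\|\WV_x\|_{L^2(\mathbb{R}^2)}=\|x\|_{L^2(\mathbb{R})}^2$ to extract the signal norm, followed by integration over $(t,f)$ and a mean-value Lipschitz bound on the map $\theta\mapsto\Phi_{\theta}[t,f]$ with constant $\kappa$. The obstacle you flag about the covariance parameters is genuine, but the paper's own proof does not resolve it either --- it simply invokes $\|f(\theta)-f(\theta')\|\le\max_{u}\|\nabla f(u)\|\,\|\theta-\theta'\|$ for a ``zero mean, unit variance'' Gaussian without controlling the derivatives in $\sigma_{\rm t},\sigma_{\rm f},\rho$ as $\Sigma$ degenerates, so your more cautious treatment is, if anything, the more honest version of the same argument.
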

\qq

In particular, for unit norm signal $x$ the representation will be contractive w.r.t. the $\theta$ parameters.
As a result, given a fixed signal $x$, close parameters $\theta$ and $\theta'$ imply close representations $\K_{x,\theta}$ and $\K_{x,\theta'}$ and thus it is sufficient to compare the vectors of parameters between learned representation or between learned and known representation to assert on what information is encoder and how as studied in Table~\ref{tab:summary_parameters}.

{\bf Universality of the Representation and Continuous Interpolation.}~
We now provide the following result that demonstrates how the formulation from (\ref{eq:transform}) is sufficient to span any common TFR. In particular, due to the Gaussian parametrization, not all TFR are reachable by varying $\theta$, we thus state the following result for our special case but prove the more general case of arbitrary kernel $\Phi$ in the proof.

\begin{figure}
    \centering
    \begin{minipage}{0.03\linewidth}
    \rotatebox{90}{frequency ($f$)}
    \end{minipage}
    \begin{minipage}{0.34\linewidth}
    \centering
    \includegraphics[width=1\linewidth]{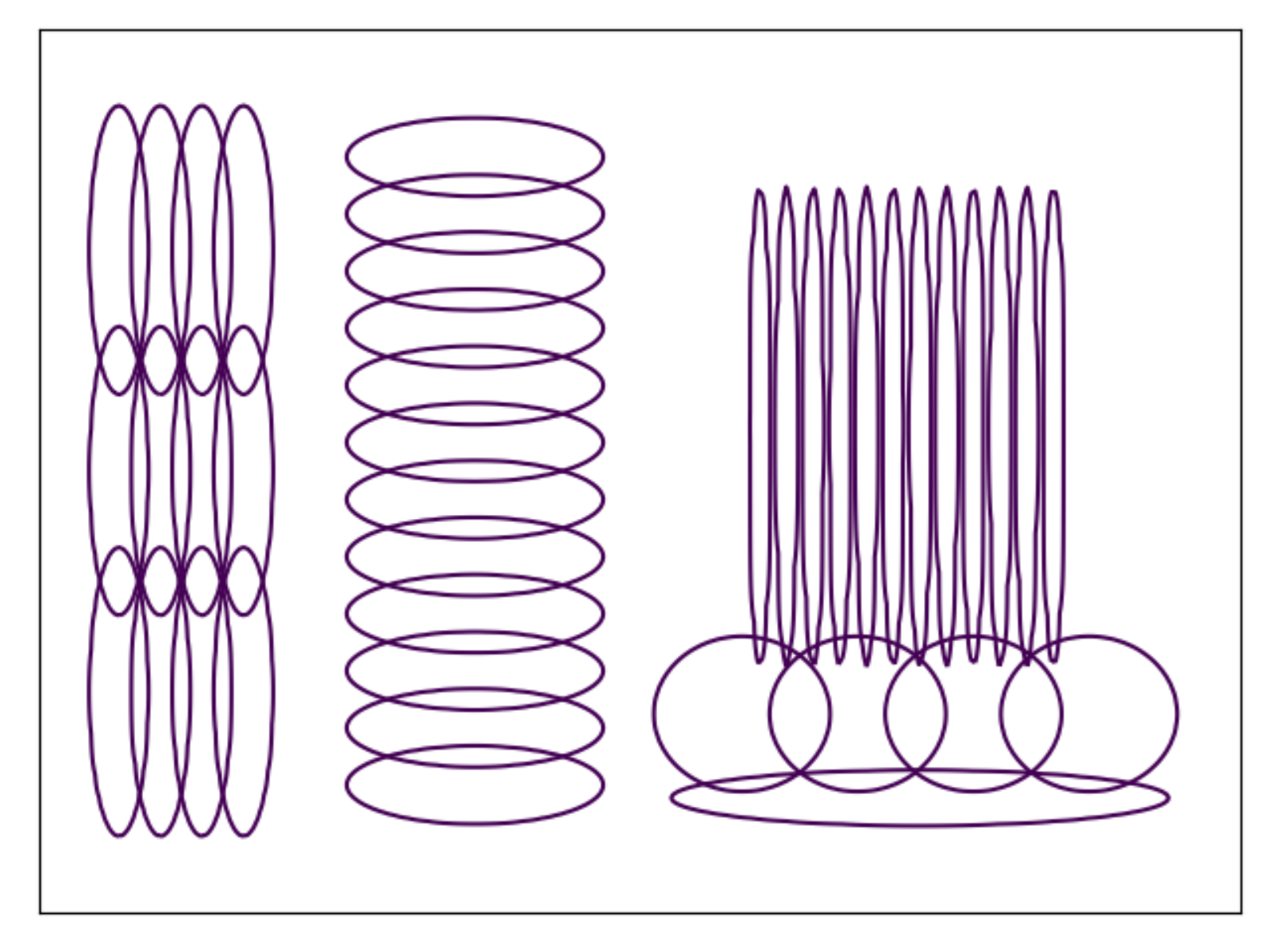}\\
    time ($t$)
    \end{minipage}
    \begin{minipage}{0.6\linewidth}
    \caption{{\small Example of Gaussian parameterization leading to a STFT with high time resolution and poor frequency resolution (left) or STFT with poor time resolution and high frequency resolution (middle) and finally the wavelet (or constant Q-transform) case with adaptive time frequency resolution, for a chirplet transform consider this last case with a non diagonal covariance matrix, with chirpness $\rho$, the support of each effective Gaussian is depicted. The $\K$-transform uses a 2D-Gaussian  for computing each $\K(t,f)$ coefficients, with learnable mean and covariance, thus allowing to continuously interpolate between known TFR or learn a new one based on the data and task at hand.
    }}
    \label{fig:example}
    \end{minipage}
    \vspace*{-4mm}
\end{figure}

\begin{prop}
\label{prop:universal_gabor}
Any Gabor transform, Gabor wavelet transform, Gabor chirplet transform (with arbitrary chirpness, from $\rho(t,f)$) and their signal adapted variants can be reached by the $\K$-transform.
(proof in App.~\ref{proof:universal_gabor}).
\end{prop}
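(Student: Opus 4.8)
The plan is to reduce the statement to two ingredients: a purely definitional observation that the $\K$-transform already contains Cohen's class, and the classical fact that each listed transform is a member of Cohen's (or the affine) class whose smoothing kernel is itself a $2$-dimensional Gaussian. Combining the two lets me read off the required $\mu(t,f)$ and $\Sigma(t,f)$ directly, so the bulk of the work is identifying the correct Gaussian kernel for each target representation.

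First I would establish the general reduction, which is the arbitrary-kernel version alluded to in the statement. Writing out the Cohen convolution (\ref{eq:cohen}) and comparing it with the inner-product form (\ref{eq:transform}),
\begin{align*}
\C_x(t,f;\Pi)=\int_{\R^2}\WV_{x}(\tau,\omega)\,\Pi(t-\tau,f-\omega)\,d\tau\,d\omega,
\end{align*}
shows that the choice $\Phi[t,f](\tau,\omega)=\Pi(t-\tau,f-\omega)$ yields $\K_{x,\Phi}=\C_x(\cdot,\cdot;\Pi)$ exactly, the conjugation in the inner product being vacuous since $\WV_{x}$ and $\Phi$ are real. Specializing $\Pi$ to a Gaussian, its symmetry means that $\Pi(t-\tau,f-\omega)$ is again Gaussian in $(\tau,\omega)$ with mean $\mu(t,f)=(t,f)^T$ (the evaluation point merely translates the atom) and covariance equal to that of $\Pi$; this matches the parameterization (\ref{eq:mu_cov}).

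It then remains to exhibit, for each target TFR, the Gaussian kernel it corresponds to, using the key fact that the Wigner--Ville distribution of a possibly chirped Gaussian atom is itself a $2$-dimensional Gaussian. For the Gabor transform I invoke the spectrogram identity $|\STFT_{x,g}|^2=\WV_{x}\star\WV_{g}$, so $\Pi=\WV_{g_{\sigma_t}}$; computing this WV gives a centered, diagonal Gaussian that is constant in $(t,f)$ with frequency spread inversely proportional to time spread (the uncertainty relation), hence $\rho\equiv0$. For the Gabor wavelet transform the analyzing atom at frequency $f$ is the Gaussian dilated by scale $s\propto 1/f$, whose WV is Gaussian with time spread growing and frequency spread shrinking with $s$; this gives an $f$-dependent diagonal $\Sigma(t,f)$ encoding the constant-$Q$ behavior, which the $\K$-transform accommodates precisely because $\Sigma$ may vary with $(t,f)$. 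For the chirplet transform the atom carries an extra modulation $e^{i\beta\tau^2/2}$, whose WV is a sheared Gaussian with off-diagonal covariance entry proportional to the chirp rate $\beta$, which I identify with $\rho(t,f)$. The signal-adapted variants are immediate, since letting $\theta(t,f)$ depend on the data is exactly the freedom already built into (\ref{eq:theta}).

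The main obstacle is the explicit Wigner--Ville computations of the previous paragraph, in particular verifying that the chirp rate maps linearly onto the off-diagonal entry $\rho$: the Gabor and wavelet cases reduce to symmetric Gaussian integrals, but the chirped case requires completing the square in a complex quadratic form and tracking the resulting covariance carefully. A secondary subtlety is that the scalogram belongs strictly to the affine class rather than the shift-invariant Cohen class; I therefore work frequency-by-frequency with the $(t,f)$-dependent kernel $\Phi[t,f]$ rather than a single fixed $\Pi$, which is legitimate precisely because the $\K$-transform convolution is time--frequency-varying.
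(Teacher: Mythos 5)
Your proposal is correct and follows essentially the same route as the paper: reduce the claim to choosing $\Phi[t,f]$ equal to the Wigner--Ville distribution of the analyzing atom (the paper phrases this via Moyal's theorem, which is exactly your identity $|\STFT_{x,g}|^2=\WV_x\star\WV_g$ extended to time--frequency-varying atoms), and then use the classical fact that the WV of a Gaussian, dilated Gaussian, or chirped Gaussian atom is itself a $2$-dimensional Gaussian whose covariance carries the bandwidth and chirp-rate information. The only cosmetic difference is that the paper cites the literature for these analytical Gaussian kernels where you sketch the computations directly.
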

\qq

The above result is crucial to understand that it is not only possible to reach various TFRs such as constant-Q transforms or spectrograms with Gaussian apodization window but also any representation that has been adapted to the signal \cite{abramovich1998wavelet}, that is, with a time-dependent set of filter-bank. We depict in Fig.~\ref{fig:example} some examples of Gaussian parameterization. To explicit the above let consider one case of signal based representation given by wavelet packet trees \cite{ramchandran1993best}. In this formulation, the representation we denote as $P_{x}$ will adapt for each time and frequency which wavelet family with the given translation and dilation is used to obtain $P_{x}(t,f)$ such that the final representation $P_{x}$ is optimal in some sense, for example with minimal Entropy. In this case, the final representation is adapted to the signal with filters varying with time and not just with frequencies. Such representation do not belong in the Cohen class but are reachable with the proposed formulation.

Another crucial property that is beneficial to learning and stability of the representation is to ensure that when moving the parameter $\theta$ from one TFR to another, the representation moves continuously and regularly which is formalized in the next result following directly from Cor.~\ref{cor:lipschitz_theta}.

\begin{thm}
\label{thm:continuity}
The representation $\K_{x ,\Phi}$ moves continuously with $\Phi$ and allow to continuously interpolate between any desired TFR (proof in App. \ref{proof:continuity}).
\end{thm}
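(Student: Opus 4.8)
The plan is to obtain the continuity statement directly from the Lipschitz estimate already in hand and then to promote it to an interpolation result using the universality of the parameterization. I would split the argument into two parts: a general continuity bound valid for arbitrary kernels $\Phi$, and the specialization to the Gaussian family, where constructing an admissible interpolating path is the delicate point.

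For the general bound I would exploit that the $K$-transform is linear in its kernel argument. For two kernels $\Phi, \Phi'$, the definition (\ref{eq:transform}) gives, pointwise in $(t,f)$,
\[
\K_{x,\Phi}(t,f) - \K_{x,\Phi'}(t,f) = \left\langle \WV_x,\, \Phi[t,f] - \Phi'[t,f] \right\rangle.
\]
Cauchy--Schwarz together with Moyal's identity $\|\WV_x\|_{L^2(\mathbb{R}^2)} = \|x\|^2_{L^2(\mathbb{R})}$, followed by integration over $(t,f)$, then yields
\[
\|\K_{x,\Phi} - \K_{x,\Phi'}\|^2_{L^2(\mathbb{R}^2)} \;\leq\; \|x\|^4_{L^2(\mathbb{R})} \int_{t,f} \big\|\Phi[t,f] - \Phi'[t,f]\big\|^2_{L^2(\mathbb{R}^2)}.
\]
Hence $\Phi \mapsto \K_{x,\Phi}$ is Lipschitz, and in particular continuous; restricting to the Gaussian family recovers the finite-dimensional statement of Cor.~\ref{cor:lipschitz_theta} phrased in the parameter $\theta$.

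For the interpolation I would invoke Prop.~\ref{prop:universal_gabor}: any two target TFRs are realized by parameter fields $\theta_0$ and $\theta_1$ (equivalently kernels $\Phi_0,\Phi_1$). I would then join them by a continuous path $\lambda \mapsto \theta_\lambda$, $\lambda\in[0,1]$, and apply the continuity above to conclude that $\lambda \mapsto \K_{x,\theta_\lambda}$ is a continuous curve in $L^2(\mathbb{R}^2)$ running from one TFR to the other, which is exactly the asserted continuous interpolation.

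The main obstacle is ensuring the path stays admissible. Naive linear interpolation of the raw coordinates $(\mu_t,\mu_f,\sigma_t,\sigma_f,\rho)$ need not preserve positive definiteness of the covariance block $\Sigma(t,f)$, since convexity in $(\sigma_t,\sigma_f,\rho)$ does not guarantee $\sigma_t^2\sigma_f^2 - \rho^2 > 0$. I would resolve this by interpolating the mean/covariance pair rather than the raw coordinates: the means lie in the convex space $\mathbb{R}^2$, and the cone of symmetric positive-definite matrices is convex, so $\Sigma_\lambda = (1-\lambda)\Sigma_0 + \lambda\Sigma_1$ is a valid covariance for every $\lambda$. The path then remains inside the admissible parameter manifold, and the Lipschitz bound transfers continuity of $\lambda \mapsto (\mu_\lambda,\Sigma_\lambda)$ to continuity of $\lambda \mapsto \K_{x,\theta_\lambda}$, completing the proof.
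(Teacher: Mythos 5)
Your argument is correct, and its first half is precisely the paper's proof: the appendix establishes the general bound $\| \K_{x,\Phi} - \K_{x,\Phi'} \|_{L^2(\mathbb{R}^2)} \leq \|x\|^2_{L^2(\mathbb{R})}\,\|\Phi-\Phi'\|_{L^2(\mathbb{R}^4)}$ via exactly your chain (linearity in the kernel, Cauchy--Schwarz, Moyal's identity $\|\WV_x\|_{L^2} = \|x\|_{L^2}^2$), and the proof of Thm.~\ref{thm:continuity} then consists of the single observation that Lipschitz continuity implies continuity. Where you genuinely go beyond the paper is the second half: the paper never constructs the interpolating path and leaves the ``interpolate between any desired TFR'' clause implicit, whereas you make it precise by taking endpoint kernels from Prop.~\ref{prop:universal_gabor}, joining them by a curve $\lambda \mapsto \theta_\lambda$, and --- the genuinely nontrivial point --- noting that naive coordinatewise interpolation of $(\sigma_{\rm t},\sigma_{\rm f},\rho)$ need not keep $\Sigma(t,f)$ positive definite, which you repair by interpolating in the convex cone of positive-definite matrices. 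That admissibility check is absent from the paper (which handles positivity only later, via the $\tanh$ reparameterization in the learning section), so your version is strictly more complete on the interpolation claim while resting on the same Lipschitz estimate for continuity.
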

\qq

As a result, the proposed representation will be stable during learning where the kernel $\Phi$ (or $\theta$) will be incrementally updated based on some rule. Continuity of the representation combine with contractivity from Cor.~\ref{cor:lipschitz_theta} ensure that small updates of the parameters will also lead to small updates in the representation and thus not lead to unstable learning/gradient based updates.

\subsection{Equivariance and Invariance Properties}
\label{sec:invariance}

Recall that the K-transform does not impose any dependency between different time nd frequency kernels $\Phi[t,f]$ as opposed to most WV based transforms such as the affine Cohen class \cite{flandrin1990affine,flandrin1993temps,daubechies2002adaptive}, the Generalized Cohen Class \cite{janssen1982locus}, Pseudo Wigner Ville \cite{flandrin1984interpretation}, Smoothed Pseudo Wigner-Ville \cite{hlawatsch1995smoothed}. Due to this freedom, the K-transform is not inherently equivariant to signal translation and/or frequency shift. Nevertheless, we now demonstrate that it is straightforward to impose some standard properties onto $\K_{x,\Phi}$ as follows where we denote the time convolution as $\star_{\rm t}$ and the frequency convolution as $\star_{\rm f}$.

\begin{prop}
\label{prop:equivariant}
The K-transform can be equipped with the following properties: (i) translation equivariance  $\iff \Phi[t-\tau,f]=\Phi[t,f](.-\tau,.) \iff \K_{x,\Phi}(.,f) = \WV_{x} \star_{\rm t} \Phi[0,f]$, (ii) frequency shift equivariance $\iff \Phi[t,f-\omega]=\Phi[t,f](.,.-\omega) \iff \K_{x,\Phi}(t,.) = WV_{x} \star_{\rm f} \Phi[t,.]$ (proof in \ref{proof:invariant}).
\end{prop}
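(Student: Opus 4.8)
The plan is to decouple the statement into two ingredients and then compose them. The first ingredient is the covariance of the Wigner-Ville transform itself under the two group actions in play. Writing $T_c x(u)=x(u-c)$ for time translation and $M_\nu x(u)=e^{i\nu u}x(u)$ for frequency modulation, a one-line change of variables in the defining integral \eqref{eq:WV} yields $\WV_{T_c x}(a,b)=\WV_x(a-c,b)$ and $\WV_{M_\nu x}(a,b)=\WV_x(a,b+\nu)$; that is, translating the signal translates $\WV_x$ in its internal time variable and modulating the signal translates $\WV_x$ in its internal frequency variable (the precise sign being fixed by the conjugation convention in \eqref{eq:WV}). These are exactly the identities that make the WV the natural object on which to build a (co)variant representation.

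The second ingredient is to push these covariances through the inner product of \eqref{eq:transform}. For part (i), I would compute, for any $x$ and any shift $c$, using the WV covariance and then the substitution $a\mapsto a+c$ in the internal time integral, $\K_{T_c x,\Phi}(t,f)=\langle \WV_{T_c x},\Phi[t,f]\rangle=\langle \WV_x,\Phi[t,f](\cdot+c,\cdot)\rangle$. Comparing this with $\K_{x,\Phi}(t-c,f)=\langle \WV_x,\Phi[t-c,f]\rangle$ shows that translation equivariance for this $x$ is the requirement $\langle \WV_x,\,\Phi[t,f](\cdot+c,\cdot)-\Phi[t-c,f]\rangle=0$. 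Demanding equality of the two kernels reproduces the stated covariance relation $\Phi[t-\tau,f]=\Phi[t,f](\cdot-\tau,\cdot)$, up to the sign convention chosen for $T_c$. Under this relation the kernel at output time $t$ is a pure shift of the kernel at time $0$, so for fixed $f$ the map $t\mapsto \K_{x,\Phi}(t,f)$ collapses to the time-convolution $\WV_{x}\star_{\rm t}\Phi[0,f]$, which is the third equivalent form; conversely, any convolution is translation equivariant, closing the cycle of implications. Part (ii) is the same computation verbatim with the modulation covariance of the WV in place of the translation covariance, the internal frequency variable in place of the internal time variable, and $\star_{\rm f}$ in place of $\star_{\rm t}$ (one works on $\mathbb{R}$ in the frequency variable, i.e. the analytic-signal convention, to avoid the boundary of $\mathbb{R}^+$).

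The only genuinely delicate step is the converse of each outer ``$\iff$'': passing from ``equivariance holds for every signal $x$'' to the pointwise kernel identity. The forward direction is the direct change of variables above, but the reverse needs the orthogonality relation $\langle \WV_x,\,\Phi[t,f](\cdot+c,\cdot)-\Phi[t-c,f]\rangle=0$, holding for all $x$, to force the bracketed kernel difference to vanish in $L^2$. Here I would invoke the totality of the Wigner-Ville family: although $\{\WV_x: x\in L^2(\mathbb{R})\}$ is not a linear subspace (the map is quadratic), its closed complex-linear span is dense in $L^2(\mathbb{R}^2)$. Concretely, by polarization $\WV_{x,y}=\tfrac14\sum_{k} i^{k}\,\WV_{x+i^{k}y}$, so that span contains every cross-Wigner distribution, and the cross-Wigner transforms are total in $L^2(\mathbb{R}^2)$ (e.g.\ by Moyal's identity, which makes the cross-Wigner map an isometry with dense range onto the finite-rank operators). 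Hence no nonzero $L^2$ function can be orthogonal to all $\WV_x$, and the kernel difference must be $0$.

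I would therefore isolate this density/totality statement as the one lemma doing the real work; everything else in the proposition is bookkeeping with the two WV covariance identities, a single change of variables, and the definitions of $\star_{\rm t}$ and $\star_{\rm f}$. The main obstacle is thus not any computation but making the converse rigorous, since it is the only place where one cannot argue kernel-by-kernel and must instead use a global completeness property of the Wigner-Ville transform.
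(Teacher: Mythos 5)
Your proposal is correct, and its computational core is the same as the paper's: establish the covariance identities $\WV_{T_c x}=\WV_x(\cdot-c,\cdot)$ and $\WV_{M_\nu x}=\WV_x(\cdot,\cdot+\nu)$ by a change of variables in the defining integral, then push them through the inner product $\langle \WV_x,\Phi[t,f]\rangle$ to match the shifted kernel against the kernel at the shifted output point. Where you genuinely depart from the paper is in the converse direction of the outer equivalences. The paper's proof simply writes the chain $\langle \WV_x,\Phi[t,f](\cdot+\tau,\cdot)\rangle=\langle \WV_x,\Phi[t-\tau,f]\rangle \iff \Phi[t,f](\cdot+\tau,\cdot)=\Phi[t-\tau,f]$ and asserts the ``iff'' without justification; equality of the two inner products for every $x$ does not, kernel-by-kernel, force equality of the kernels unless one knows that nothing nonzero in $L^2(\mathbb{R}^2)$ is orthogonal to every $\WV_x$. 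You isolate exactly this as the load-bearing lemma and supply the standard argument (polarization to reduce auto-Wigner to cross-Wigner distributions, then Moyal's identity to get totality of the cross-Wigner family), which makes the necessity direction rigorous where the paper leaves it implicit. Your version is therefore slightly stronger and more careful; the paper's version buys brevity at the cost of treating the converse as self-evident. One small caveat: the polarization step uses complex combinations $x+i^k y$, so the totality argument lives in complex $L^2(\mathbb{R})$; since the paper allows complex signals (it writes $x^*$ throughout) this is consistent, but it is worth flagging if one insists on real-valued test signals. Your remark about working on $\mathbb{R}$ rather than $\mathbb{R}^+$ in the frequency variable for part (ii) is also a legitimate point the paper does not address.
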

\qq

As a result, for the K-transform to be equivariant to translation and frequency shifts the kernel $\Phi$ must be such that $\Phi[t,f](\tau, \omega)=\Phi[t',f'](\tau-(t'-t),\omega-(f'-f))$ or equivalently, the representation can be written as the $2D$ convolution of the Wigner-Ville transform with the $2D$ kernel $\Phi[0,0]$ as $
\K_{x,\Phi} = \WV_{x} \star \Phi[0,0]$
which falls back to the Cohen class (recall (\ref{eq:cohen})) and in particular the Smoothed Pseudo Wigner Ville \cite{andria1996interpolated}. Notice that if the $\K$-transform is constrained to only be translation equivariant, and that the covariance matrices $\Sigma(t,f)$ are diagonal with elements (constant in time) varying with frequency based on the center frequencies $\mu_f(t,f)$ then the $\K$-transform falls back to the Affine Smoothed Pseudo Wigner Ville \cite{flandrin1990affine}.

A key property of most TFR reside in their time-frequency resolution, that is, how precise will be the representation into reflecting the frequency content at each tiem step present in the studied signal. Standard TFR have limited resolution while the WV has perfect time and frequency resolution.
The $\K$-transform can easily reach super resolution by learning small covariance matrices $\Sigma(t,f)$, however, we now demonstrate that this super-resolution will make the representation more sensitive to input perturbations, and vice-versa. Denote a transformation $D$ applied on the signal $x$ and formally characterize the induced perturbation amount in the $K$-trasnform by $\| \K_{x, \Phi}-\K_{D(x),\Phi} \|_{L^2(\mathbb{R}^2)}$ \cite{mallat2012group}, let also denote $\det(\Sigma(t,f))=\sigma_{\rm t}\sigma_{\rm f}-\rho^2$.

\begin{prop}
\label{prop:invariance}
A transformation of the signal $D(x)$ implies a change in the representation proportional to the inverse of $\sigma_{\rm t}\sigma_{\rm f}-\rho^2$ as
$
\|\K_{x,\Phi}-\K_{D(x),\Phi} \|_{L^2(\mathbb{R}^2)} \leq \max_{t,f}\frac{\kappa}{\det(\Sigma(t,f))} \times \|x-D(x) \|_{L^2(\mathbb{R})},
$
with $ \kappa$ the Lipschitz constant of the WV which exists and is finite for bounded domain
(proof in App.~\ref{proof:invariance})
\end{prop}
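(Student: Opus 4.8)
The plan is to isolate the $\det(\Sigma(t,f))$ dependence by splitting the estimate into (i) a Cauchy--Schwarz step that separates the signal-dependent part from the kernel, (ii) a bilinearity/Moyal argument that makes the quadratic Wigner--Ville map behave linearly in the perturbation, and (iii) an explicit evaluation of the Gaussian kernel norm. First I would exploit that for each fixed $(t,f)$ the $K$-transform is \emph{linear} in its Wigner--Ville argument, so that
\begin{align*}
\K_{x,\Phi}(t,f) - \K_{D(x),\Phi}(t,f) = \left\langle \WV_{x} - \WV_{D(x)},\, \Phi[t,f]\right\rangle,
\end{align*}
and apply Cauchy--Schwarz pointwise to obtain $|\K_{x,\Phi}(t,f) - \K_{D(x),\Phi}(t,f)| \leq \|\WV_{x}-\WV_{D(x)}\|_{L^2(\mathbb{R}^2)}\,\|\Phi[t,f]\|_{L^2(\mathbb{R}^2)}$.

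The second step controls the Wigner--Ville increment. Since $\WV$ is quadratic (a sesquilinear form) in the signal, it is not globally Lipschitz, so I would pass through the cross Wigner--Ville $\WV_{u,v}$ and use the polarization identity
\begin{align*}
\WV_{x} - \WV_{D(x)} = \WV_{x-D(x),\,x} + \WV_{D(x),\,x-D(x)},
\end{align*}
which renders the difference genuinely linear in $x-D(x)$. Moyal's identity $\|\WV_{u,v}\|_{L^2(\mathbb{R}^2)} = \|u\|_{L^2}\|v\|_{L^2}$ then gives $\|\WV_{x}-\WV_{D(x)}\|_{L^2} \leq (\|x\|_{L^2}+\|D(x)\|_{L^2})\,\|x-D(x)\|_{L^2}$. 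On a bounded domain the prefactor $\|x\|+\|D(x)\|$ is finite, and this is precisely the local Lipschitz constant $\kappa$ of the WV referenced in the statement.

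The third step evaluates the kernel norm: for the Gaussian parameterization of $\Phi[t,f]$ with covariance $\Sigma(t,f)$, a direct Gaussian integral expresses $\|\Phi[t,f]\|_{L^2(\mathbb{R}^2)}$ as an explicit \emph{decreasing} function of $\det(\Sigma(t,f))$, which is exactly where the $1/\det(\Sigma(t,f))$ factor enters and where the narrative becomes rigorous -- shrinking covariances (super-resolution) inflate the kernel norm and hence amplify sensitivity to $D$. Bounding this factor by its worst case $\max_{t,f}$ over the bounded time--frequency domain, integrating the pointwise estimate in $(t,f)$, and taking square roots then assembles the claimed inequality, with the finite domain measure absorbed into $\kappa$.

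I expect the main obstacle to be the second step: because the WV map is quadratic rather than linear it admits no global Lipschitz constant, so the polarization rewriting together with the restriction to bounded-norm signals is essential to produce a finite $\kappa$, and one must verify that Moyal's identity indeed applies to the cross terms so that the bound stays linear in $\|x-D(x)\|_{L^2}$. A secondary care point is reconciling the precise power of $\det(\Sigma(t,f))$ emerging from the Gaussian $L^2$ norm with the power stated in the proposition, which I would handle by folding the numerical constants and domain measure into $\kappa$.
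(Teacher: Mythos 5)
Your argument is correct and is in fact tighter than the paper's own proof on its weakest step. The paper's proof merely asserts that the Wigner--Ville map admits a Lipschitz constant $\kappa$ ``for bounded domain'' and bounds the Gaussian kernel by its sup-norm; you instead (i) use linearity of $\langle \cdot,\Phi[t,f]\rangle$ in the $\WV$ argument plus Cauchy--Schwarz with the $L^2$ norm of the kernel, and (ii) justify the Lipschitz step explicitly via the polarization $\WV_{x}-\WV_{D(x)}=\WV_{x-D(x),x}+\WV_{D(x),x-D(x)}$ and Moyal's identity, yielding the local constant $\|x\|_{L^2}+\|D(x)\|_{L^2}$. This is a genuine improvement: since $\WV$ is quadratic it has no global Lipschitz constant, and the paper never says where its $\kappa$ comes from, whereas your decomposition produces it. The one place where you should not be too quick is the determinant power: the $L^2$ norm of a normalized 2D Gaussian is $\bigl(4\pi\sqrt{\det\Sigma}\bigr)^{-1/2}\propto(\det\Sigma)^{-1/4}$, and the sup-norm used by the paper is $\bigl(2\pi\sqrt{\det\Sigma}\bigr)^{-1}\propto(\det\Sigma)^{-1/2}$; neither matches the stated $(\det\Sigma)^{-1}$, and a mismatch in the \emph{exponent} cannot be absorbed into a constant $\kappa$ except under a restriction such as $\det\Sigma\leq 1$ (the super-resolution regime the proposition is really about). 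This looseness is inherited from the statement itself, not introduced by your proof, so it does not count against you; just state the exponent you actually obtain rather than promising to fold it away. The final integration over $(t,f)$ requires a bounded time--frequency domain exactly as in the paper.
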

\qq

Based on the data and task at hand, the $\K$-transform can adapt its parameters to obtain an invariant representation a la scattering network \cite{mallat2012group} thus providing a robust time-serie representation, while having poor resolution, or, conversely, it can reach a super-resolution representation.

We now demonstrate how the $K$-transform can be computed efficiently solely from Fast Fourier Transforms, for other properties of the transform such as characterization of the interference based on the covariance please see Appendix~\ref{sec:interference}.

\section{Fast Fourier Transform Computation}

The WV, the seed of any possible K-transform, for a discrete signal of length $N$ is a $N\times N$ matrix obtained by doing $N$ Fourier transforms of length $N$. Its computational complexity is thus quadratic ($\mathcal{O}(N^2\log(N))$) making it unsuited for large scale tasks. We now demonstrate how to leverage the Gaussian parameterization to greatly speed-up the K-transform computation.

\subsection{The Short-Time-Fourier-Transform Trick}
\label{sec:STFT}

In order to provide a fast implementation, we will first draw the link between a spectral correlation of the STFT and the WV from which we extend our method.
Denote the STFT of a signal with a Gabor apodization window of time spread $\sigma$ by $\STFT_{x,\sigma}$. We now demonstrate the equivalence between (i) doing a spectral autocorrelation of $\STFT_{x,\sigma}$ and (ii) doing a 2D Gaussian with diagonal covariance time-frequency-invariant convolution of the WV.

\begin{lemma}
\label{thm:fast}
The spectral correlated STFT corresponds to a smoothed Wigner-Ville Distribution  as
\begin{multline}
    \left(\WV_{x} \star \mathcal{N}\left(.;(0,0)^T, diag(
\sigma_{\rm t}, \sigma_{\rm f})\right)\right)(t,f)=\int_{-\infty}^{\infty} 
g_{\sigma_{\rm f}^{-1}-\sigma_{\rm t}}(\omega)e^{j 2 \pi \omega t}
 \STFT _{x,\sigma_{\rm f}^{-1}}(t,f+\frac{\omega}{2})\\\times \STFT_{x,\sigma_{\rm f}^{-1}}^*(t,f-\frac{\omega}{2})d\omega,\label{eq:fast}
\end{multline}
where $g_{\sigma}$ is a $1$-dimensional Gaussian function with spread $\sigma$ and $\sigma_{\rm t}\leq \sigma_{\rm f}^{-1}$. (Proof in App. \ref{proof:fast}).
\end{lemma}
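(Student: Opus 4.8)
The plan is to start from the right-hand side of~(\ref{eq:fast}), expand the two short-time Fourier transforms through the STFT definition, and reorganize the resulting quadruple integral until the Wigner-Ville autocorrelation kernel together with a \emph{separable} Gaussian smoothing kernel emerge; identifying the latter as $\mathcal{N}(\cdot;(0,0)^T,\mathrm{diag}(\sigma_{\rm t},\sigma_{\rm f}))$ then recovers the left-hand side. Writing $s=\sigma_{\rm f}^{-1}$ for the STFT window spread and using the Gabor window $g_{s}$, the product under the $\omega$-integral is
\begin{align*}
\STFT_{x,s}\!\left(t,f+\tfrac{\omega}{2}\right)\STFT_{x,s}^*\!\left(t,f-\tfrac{\omega}{2}\right)
=\iint g_{s}(t-\tau)\,g_{s}(t-\tau')\,x(\tau)\,x^*(\tau')\,e^{-i(f+\omega/2)\tau+i(f-\omega/2)\tau'}\,d\tau\,d\tau'.
\end{align*}
The key change of variables is to pass to center/difference coordinates $u=\tfrac{\tau+\tau'}{2}$ and $v=\tau-\tau'$ (Jacobian one): this turns $x(\tau)x^*(\tau')$ into the Wigner-Ville kernel $x(u+\tfrac{v}{2})x^*(u-\tfrac{v}{2})$ and collapses the two linear phases into a single phase of the form $e^{-i(\omega u + fv)}$, which is exactly the Fourier kernel needed to read off $\WV_{x}$ at center $u$.

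I would then invoke two elementary Gaussian identities. First, the product of the two window factors is again Gaussian and \emph{factorizes}: with $a=t-u$ one has $g_{s}(a+\tfrac{v}{2})\,g_{s}(a-\tfrac{v}{2})\propto e^{-a^2/s^2}\,e^{-v^2/(4s^2)}$, so the dependence on the center $u$ and on the lag $v$ separates into a time-centering Gaussian and a lag-damping Gaussian. Second, once the external factor $g_{\sigma_{\rm f}^{-1}-\sigma_{\rm t}}(\omega)\,e^{i\omega t}$ is inserted, the $\omega$-integral is the Fourier transform of a Gaussian against $e^{-i\omega u}$ and hence produces one further Gaussian in $(t-u)$. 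Collecting the two Gaussians in $t-u$ (one from the window product, one from the $\omega$-integral) and carrying out the $u$-integral realizes a convolution in time, i.e.\ the factor $\star_{\rm t}$; carrying out the $v$-integral against the surviving phase, with the lag-damping Gaussian multiplying the autocorrelation kernel, realizes by the standard ``multiplication in lag $\Leftrightarrow$ convolution in frequency'' duality the factor $\star_{\rm f}$ applied to $\WV_{x}$.

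Assembling the time- and frequency-smoothing factors shows the right-hand side equals $\bigl(\WV_{x}\star\mathcal{N}(\cdot;(0,0)^T,\mathrm{diag}(\sigma_{\rm t},\sigma_{\rm f}))\bigr)(t,f)$, as claimed. The main obstacle is purely bookkeeping: one must track every variance through the product-of-Gaussians and Fourier-transform-of-Gaussian steps---together with the frequency-axis orientation and conjugation symmetries fixed by the paper's STFT and $\WV_{x}$ sign conventions and the chosen $2\pi$ normalization---so that the two separate smoothings combine into \emph{precisely} the covariance $\mathrm{diag}(\sigma_{\rm t},\sigma_{\rm f})$ rather than some rescaled version. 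This spread matching is also where the hypothesis $\sigma_{\rm t}\le\sigma_{\rm f}^{-1}$ enters: it guarantees $\sigma_{\rm f}^{-1}-\sigma_{\rm t}\ge 0$, so that the intermediate window $g_{\sigma_{\rm f}^{-1}-\sigma_{\rm t}}$ appearing in~(\ref{eq:fast}) is a genuine Gaussian of nonnegative spread and the whole computation is well defined.
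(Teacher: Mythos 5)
Your proposal is correct and follows essentially the same route as the paper's proof: expand the right-hand side via the STFT definition, pass to center/difference coordinates to expose the Wigner--Ville autocorrelation kernel, and identify the residual separable Gaussian smoothing kernel (the paper reaches the same kernel by first writing the spectral window as an inverse Fourier transform so that the $\eta$-integral yields a delta, then recognizing the kernel as $g(\tau)\WV_{w}(\tau,\omega)/2$, which for a Gaussian window is exactly your factorized time-centering and lag-damping Gaussians). The variance bookkeeping you flag as the main obstacle is indeed where the argument is loosest, in the paper's own appendix as well as in your sketch.
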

\qq

The above is very fast to be computed in practice for finite sample signals as long as $\sigma_{\rm f}^{-1}$ is small as this defines the largeness of the window of the STFT. Similarly, whenever $\sigma_{\rm f}^{-1}-\sigma_{\rm t}$ is small, computing (\ref{eq:fast}) is fast as it reduces the amount of frequencies to go through during the frequency integral computation. It is now possible to leverage this already convolved (and fast) transform to obtain the K-transform. Notice that any $\K$-transform can be reached that way, as long as $2D$ Gaussian used in (\ref{eq:fast}) has smaller effective support than the 2D Gaussians of the desired $\K$-transform.

\begin{thm}
\label{thm:gaussianproduct}
Any K-transform (recall (\ref{eq:transform}) can be obtained from convolving the WV as follows
\begin{align*}
    K_{x,\theta}(t,f)=\left(\left(\WV_{x} \star \mathcal{N}\left(.;(0,0)^T, diag(
\sigma'_{\rm t}, \sigma'_{\rm f})\right)\right)\star \mathcal{N}\left(.;(0,0)^T,  \Sigma'(t,f))\right) \right)(\mu(t,f))
\end{align*}
with $\Sigma'(t,f) + diag(
\sigma'_{\rm t}, \sigma'_{\rm f})=\Sigma(t,f)$. (Proof in Appendix~\ref{proof:gaussianproduct}.)
\end{thm}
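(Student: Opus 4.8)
The plan is to recognize that the Gaussian $K$-transform is itself nothing but a centered Gaussian convolution of $\WV_x$ sampled at the kernel mean, and then to exploit the closure of Gaussians under convolution to split that single convolution into the two advertised stages. First I would rewrite the defining inner product (\ref{eq:transform}). Since the $2$D Gaussian $\mathcal{N}(\cdot;\mu,\Sigma)$ is symmetric about its mean, one has $\mathcal{N}(u;\mu(t,f),\Sigma(t,f))=\mathcal{N}(\mu(t,f)-u;0,\Sigma(t,f))$, so that
\begin{align*}
\K_{x,\theta}(t,f)=\int_{\mathbb{R}^2}\WV_x(u)\,\mathcal{N}\big(\mu(t,f)-u;0,\Sigma(t,f)\big)\,du=\big(\WV_x\star\mathcal{N}(\cdot;0,\Sigma(t,f))\big)(\mu(t,f)).
\end{align*}
Thus the $K$-transform is exactly the convolution of the WV with a centered Gaussian of covariance $\Sigma(t,f)$, read off at the point $\mu(t,f)$.

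The second ingredient is the Gaussian semigroup identity $\mathcal{N}(\cdot;0,\Sigma_1)\star\mathcal{N}(\cdot;0,\Sigma_2)=\mathcal{N}(\cdot;0,\Sigma_1+\Sigma_2)$, which I would establish by taking $2$D Fourier transforms: the transform of a centered Gaussian is again a scaled centered Gaussian carrying the covariance as the quadratic form in the exponent, and products of such transforms simply add those quadratic forms, i.e.\ add the covariances. Setting $\Sigma_1=diag(\sigma'_{\rm t},\sigma'_{\rm f})$ and $\Sigma_2=\Sigma'(t,f)$ with $\Sigma_1+\Sigma_2=\Sigma(t,f)$ by hypothesis, and invoking associativity of convolution (legitimate since $\WV_x\in L^2$ and centered Gaussians lie in $L^1\cap L^2$, so that Young's inequality and Fubini apply), I would factor
\begin{align*}
\WV_x\star\mathcal{N}(\cdot;0,\Sigma(t,f))=\big(\WV_x\star\mathcal{N}(\cdot;0,diag(\sigma'_{\rm t},\sigma'_{\rm f}))\big)\star\mathcal{N}(\cdot;0,\Sigma'(t,f)).
\end{align*}
Evaluating both sides at $\mu(t,f)$ and combining with the first step yields the claimed identity, in which the inner convolution is precisely the quantity Lemma~\ref{thm:fast} computes efficiently from STFTs.

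The one genuine obstacle is well-posedness of the split: $\Sigma'(t,f)=\Sigma(t,f)-diag(\sigma'_{\rm t},\sigma'_{\rm f})$ must be a valid (positive semidefinite) covariance for every $(t,f)$, since otherwise $\mathcal{N}(\cdot;0,\Sigma'(t,f))$ is not a genuine smoothing kernel and the factorization is vacuous. This is exactly the ``smaller effective support'' condition flagged before the statement, namely that the seed diagonal Gaussian fed to the STFT trick be dominated by the target Gaussian. I would make this quantitative by requiring $diag(\sigma'_{\rm t},\sigma'_{\rm f})\preceq\Sigma(t,f)$ uniformly in $(t,f)$, for instance by choosing $\sigma'_{\rm t},\sigma'_{\rm f}$ no larger than the smallest eigenvalue of $\Sigma(t,f)$ over the bounded time-frequency domain; then $diag(\sigma'_{\rm t},\sigma'_{\rm f})\preceq\lambda_{\min}(\Sigma(t,f))\,I\preceq\Sigma(t,f)$, so $\Sigma'(t,f)\succeq0$ and the split is admissible. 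Everything else is the routine Gaussian-convolution bookkeeping sketched above.
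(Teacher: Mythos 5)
Your proof is correct and follows the same route as the paper's: the paper's entire argument for this theorem is the one-line observation that convolving with two Gaussians is the same as convolving with a single Gaussian whose covariance is the sum of the two. Your version is strictly more complete than what the paper writes down --- you make explicit the preliminary step that the Gaussian $\K$-transform equals a centered-Gaussian convolution of $\WV_x$ evaluated at $\mu(t,f)$ (via symmetry of the kernel about its mean), you justify the semigroup identity and associativity, and you correctly isolate positive semidefiniteness of $\Sigma'(t,f)=\Sigma(t,f)-\mathrm{diag}(\sigma'_{\rm t},\sigma'_{\rm f})$ as the admissibility condition that the paper only gestures at informally via the ``smaller effective support'' remark preceding the statement.
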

\qq

From the above we see how computing the K-transform for various kernels can be done from a base representation which is not the WV but the already convolved WV (\ref{eq:fast}) allowing fast computation as we informally highlight below.

\begin{prop}
\label{prop:complexity}
The K-transform time and/or frequency resolution is inversely proportional to its speed of computation.
\end{prop}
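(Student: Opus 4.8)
The plan is to treat this as a complexity statement grounded in the factorization of Theorem~\ref{thm:gaussianproduct} together with the cost of the STFT trick in Lemma~\ref{thm:fast}, and to identify ``resolution'' with the smallness of the kernel covariance, as already quantified by Proposition~\ref{prop:invariance} where the sensitivity (hence effective resolution) scales as $1/\det(\Sigma(t,f))$. I would first fix a precise notion of each quantity. For resolution I would use $\det(\Sigma(t,f))=\sigma_{\rm t}\sigma_{\rm f}-\rho^2$: a small determinant means a tightly localized $2$D Gaussian kernel and thus high joint time-frequency resolution, matching Proposition~\ref{prop:invariance}. For speed I would adopt the computational model of Lemma~\ref{thm:fast}: the base smoothed Wigner-Ville is formed from an STFT whose window spread is $(\sigma_{\rm f}')^{-1}$ and a frequency integral whose effective length is governed by $g_{(\sigma_{\rm f}')^{-1}-\sigma_{\rm t}'}$, so the dominant operation count grows with the number of retained frequency bins, which in turn grows as the base smoothing shrinks.

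Second, I would link the two through the additive split $\Sigma'(t,f)+\mathrm{diag}(\sigma_{\rm t}',\sigma_{\rm f}')=\Sigma(t,f)$ of Theorem~\ref{thm:gaussianproduct}. Since both summands are positive semidefinite, the base diagonal covariance is bounded above by $\Sigma(t,f)$; to make the base step as cheap as possible one takes the base Gaussian as wide as this budget allows and leaves the residual $\Sigma'$ to the grid-local second convolution. A Nyquist/bandlimiting argument then yields the inverse proportionality: a Gaussian-smoothed representation of width $w$ is essentially band-limited to $1/w$, so it can be faithfully represented, and hence the FFT in (\ref{eq:fast}) computed, on a grid whose number of points scales like $1/w$. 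Therefore the number of operations scales inversely with the base smoothing width, i.e.\ inversely with the admissible ``size'' of $\Sigma(t,f)$; equivalently, speed scales as $\det(\Sigma(t,f))$, which is the reciprocal of the resolution measure fixed in the first step.

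The main obstacle I anticipate is making ``inversely proportional'' an exact relationship rather than a loose monotone tradeoff. This requires (i) committing to a concrete discretization model so that the FFT length and the number of frequency bins in (\ref{eq:fast}) become explicit functions of $\sigma_{\rm t}'$ and $\sigma_{\rm f}'$, and (ii) justifying the subsampling step via a quantitative bandlimiting bound for Gaussian smoothing, controlling the aliasing error so that the coarse-grid computation still represents the same $\K$-transform up to negligible error. One must also argue that the base smoothing can always be chosen to saturate the budget given by $\Sigma(t,f)$, i.e.\ that $\Sigma'$ can be driven to the smallest value the kernel parameterization permits while the base Gaussian absorbs the rest. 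It is precisely these constants and the choice of computational model that carry the real content, which is why the statement is phrased informally; a fully rigorous version would fix the grid and replace ``proportional'' by matching upper and lower operation counts.
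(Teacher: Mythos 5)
Your proposal is correct and follows essentially the same route as the paper, which treats this proposition informally: the appendix on computational complexity argues exactly that the cost ranges from quadratic when the kernels tend to a Dirac (forcing computation of the exact WV) to linear for large covariances (small-window spectrograms with little spectral correlation), i.e.\ the base smoothing budget fixed by $\Sigma(t,f)$ via Theorem~\ref{thm:gaussianproduct} governs the operation count of the STFT trick in Lemma~\ref{thm:fast}. Your Nyquist/subsampling elaboration and your caveat about what it would take to make ``inversely proportional'' precise are in fact more careful than the paper's own two-endpoint sketch.
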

\qq

The above result demonstrates how the fast transform can be used without any downside as long as the learned parameters $\theta$ are not too precise in time and/or frequency. It becomes even better if the parameters are shared across time, leading to an equivariant representation. For details on the Gaussian window truncation please see Appendix~\ref{appendix:truncate}.

\subsection{Computational complexity and pseudo-code}
\label{sec:complexity}

We provide below the explicit pseudo code that summarizes all the involved steps and their impact of the final representation obtained, for the computational complexity see Appendix~\ref{appendix:complexity}, we describe below the pseudo code from Thm.~\ref{thm:fast}. First , do a Gabor transform of $\bx$ with a
    Gaussian window $g_{\sigma_{1/\rm f}}$ for apodization leading to $\STFT_{\bx,\sigma_{1/\rm f}}$. Selection of the $1/\sigma_{\rm f}$ parameter will determine the final frequency resolution of the transform given by $\sigma_{\rm f}$ (recall Prop.~\ref{prop:complexity} and Fig.~\ref{fig:example}).
    Second, do the spectral auto-correlation of $\STFT_{\bx,\sigma_{1/\rm f}}$ with a gaussian spectral apodization window $g_{1/\sigma_{\rm t}}$ with spread $1/\sigma_{\rm t}$ to obtain (\ref{eq:fast}). This maintains the frequency resolution of $\STFT_{\bx,\sigma_{1/\rm f}}$ while increasing the original time resolution $\frac{1}{\sigma_{\rm f}}$ by $\sigma_{\rm t}$. Third, compute the $\K$-transform as per Thm.~\ref{thm:fast} by applying the $2D$ kernels $\Phi[t,f]$ onto (\ref{eq:fast}).

\section{Learning and Experiments}

We now propose to briefly describe how learning of the $\K$-transform is done and validate the method on various datasets.

\subsection{Gaussian Parameter Learning}

We propose to learn the parameters $\theta(t,f)$ from (\ref{eq:theta}). Since all the involved operations of the $\K$-transform are differentiable and we will pipe the transform to various DN architectures we leverage gradient based learning. In order to keep an unconsrained optimization problem we instread explicitly constraint the parameters as follows $
    \sigma_{\rm f}(t,f)={\rm abs}(\sigma_{\rm f}'(t,f)) + \epsilon$, 
    $\sigma_{\rm t}(t,f)={\rm abs}(\sigma_{\rm t}'(t,f)) + \epsilon$, and $
    \rho(t,f) = \text{tanh}(\rho'(t,f))\sqrt{\sigma_{\rm t}(t,f)\sigma_{\rm f}(t,f)}$. With this parametrization, the unconstrained learnable parameters are  $\sigma_{\rm f}'(t,f),\sigma_{\rm t}'$ and $\rho'$. This also allows to always have a nonzero covariance determinant since $\sigma_{\rm time}>0$ and $\sigma_{\rm freq}>0$ and ${\rm det}(C(t,f)) =\sigma_{\rm time}\sigma_{\rm freq}-\rho^2 > 0$. We can now train those unconstrained parameters along with any other model parameters with some flavors of gradient descent for any task at hand. In our case we focus on classification.

\subsection{Experimental Validation}
\label{sec:experiments}

We propose to validate our method on various classification tasks we briefly describe below. We briefly describe below the used dataset and DN architectures and provide the accuracy results averaged over $10$ runs and over multiple learning rates in Table~\ref{tab:my_label}

{\bf DOCC10.}~
The goal of this dataset is to classify recordings obtained from various subsea acoustic stations or autonomous surface vehicles (ASV). The recordings consist of clicks and the task is to perform classification based on the corresponding emitting specy among $10$ marine mammals such as Grampus griseus- Risso's dolphin, Globicephala macrorhynchus
This task is published by ENS Paris College de France: 
DOCC10 \footnote{\url{https://challengedata.ens.fr/challenges/32}}, for additional details please see Appendix~\ref{sec:docc10}.
{\bf Audio-MNIST.}~
This dataset consists of multiple ($60$) speakers of various characteristics enunciation digits from $0$ to $9$ inclusive \cite{becker2018interpreting}. The classification task consists of classifying the spoken digit, $30,000$ recordings are given in the dataset. Each recording is from a controlled environment without external noise except breathing and standard recording device artifacts. 
{\bf BirdVox-70k.}~
This dataset consists of avian recording obtained during  the  fall  migration  season  of  2015. The recording were obtained in North American and is made of a balanced binary classification of predicting the presence or not of a bird in a given audio clip 
\cite{lostanlen2018birdvox}. In total $70,000$ clips are contaiend in the dataset. The outside recordings present various types of non stationnary noises and sources.
{\bf Google Speech Commands.}~
This dataset consists of $65,000$ one-second long utterances of $30$ short words such as ``yes'', ``no'', ``right'' or ``left'' \cite{speechcommands}. The recordings are obtained from thousands of different people who contributed through the AIY website \footnote{\url{https://aiyprojects.withgoogle.com/open_speech_recording}}. The task is thus to classify the spoken command.
{\bf FreeSound DCASE.}~ This task addresses the problem of general-purpose automatic audio tagging \cite{fonseca2018general}. The dataset provided for this task is a reduced subset from FreeSound and is a large-scale, general-purpose audio dataset annotated with labels from the AudioSet Ontology. There are $41$ classes and $\approx$95,000 training samples, classes are diverse such as Bus, Computer keyboard, Flute, Laughter or Bark. The main challenge of the task is the noise present in the training set labels reflecting the expensiveness of having high quality annotations.

For each dataset we experiment composing the TFR with three DN architectures (for detailed description of those architectures and additional hyper parameters please see Appendix~\ref{appendix:architecture}). In short, we experiment with TFR-time pooling-linear classifier ({\bf Linear Scattering}), TFR-time pooling-MLP ({\bf Nonlinear Scattering}) and TFR-Conv 2D - time pooling - linear classifier ({\bf Linear Joint Scattering}). Those terminologies are based upon the Scattering Network as they perform time averaging \cite{bruna2013invariant}.

We compare our model that we abbreviate as lwvd to the learned Morlet filter-bank \cite{NIPS2018_7711} denoted as lmorlet, and to the learnable sinc based wavelet \cite{ravanelli2018interpretable} denoted as lsinc, which are the current state-of-the-art techniques proposing a learnable TFR. In order to calibrate all the results we also compare with a fixed Morlet filter-bank which is the one that is often seen as the most adapted when dealing with speech and bird signals. Here are external benchmarks, for Audio MNIST using AlexNet on spectrogram leads to $95\%$ \cite{becker2018interpreting} accuracy; for bird Vox, using $3$ layers of 2D convolution and $2$ fully connected layers on top of a fixed a priori designed TFR leads to $90.48\%$ test accuracy \cite{lostanlen2018birdvox}; for the Google command dataset, a DenseNet 121 without pretraining nor data augmentation reaches $80\%$ test accuracy \cite{de2018neural}. In all cases our model reaches comparable performanes while relying on a much simpler DN, and outperforms other learnable TFRs and the a priori optimal one across dataset and optimization settings, offering significan performance gains. We propose in Appendix~\ref{appendix:figures} all the figures of the learned filters and their interpretation.

\begin{table*}[t]
\caption{
    {\small 
    Average over $10$ runs of classification result using three architectures, one layer scattering followed by a linear classifier (Linear Scattering), one layer scattering followed by a two layer neural network (Nonlinear Scattering) and a two layer scattering with joint (2D) convolution for the second layer followed by a linear classifier (Linear Joint Scattering). For each architecture and dataset, we experiment with the baseline, Morlet wavelet fitler-bank (morlet), and learnable frameworks being ours (lwvd), learnable sinc based filters (sinc) and learnable Morlet wavelet (lmorlet). As can be seen, across the dataset, architectures and learning rates, the proposed method provides significant performance gains.}}
    \centering
\setlength{\tabcolsep}{2.2pt}
    \def\arraystretch{1.14}%
\begin{tabular}{|l|r||r||r|r|r||r||r|r|r||r||r|r|r|}
\hline
\multicolumn{1}{|c}{}& & \multicolumn{4}{c||}{Linear Scattering}&\multicolumn{4}{c||}{Nonlinear Scattering}&\multicolumn{4}{c|}{Linear Joint Scattering}\\
\multicolumn{1}{|c}{}&\multicolumn{1}{c||}{le. rate}&{\em morlet}&lwvd&sinc&lmorlet&{\em morlet}&lwvd&sinc&lmorlet&{\em morlet}&lwvd&sinc&lmorlet\\ \hline\hline
 \multirow{3}{*}{\rotatebox{90}{DOCC10}} 
 & 0.0002 &14.3 & 63   & 31.1 & 29.7 & 54.1 & 84.7 & 74.4 & 74.9 & 70.7 & {\bf 83.7} & 82.4 & 75.8 \\
 & 0.001  &12.7 & 65.5 & 26.0   & 28.3 & 50.1 & {\bf 87.9} & 77.4 & 77.4 & 70.1 & 80.6 & 80.8 & 73.2 \\
 & 0.005  &13.0   & {\bf 65.9} & 17.1 & 27.0   & 51.8 & 87.1 & 43.3 & 83.2 & 65.9 & 78.0   & 70.5 & 80.8 \\
\hline\hline
 \multirow{3}{*}{\rotatebox{90}{BirdVox}}
  & 0.0002 &63.8 & 77.9 & 69.6 & 65.4 & 84.7 & 92.9 & 88.1 & 85.8 & 82.1 & {\bf 90.5} & 87.2 & 84.3 \\
 & 0.001  &65.0   & 80.0   & 67.2 & 64.3 & 85.0   & {\bf 94.2} & 88.1 & 86.6 & 80.3 & 88.7 & 86.8 & 83.1 \\
 & 0.005  &65.2 & {\bf 80.4} & 67.3 & 66.9 & 84.8 & 94.2 & 86.0   & 87.2 & 78.1 & 87.5 & 78.3 & 82.8 \\
\hline\hline
 \multirow{3}{*}{\rotatebox{90}{MNIST}}
 &0.0002&43.9 & 68.4 & 52.2 & 44.0   & 82.3 & 85.3 & 10.4 & 83.0 & 95.3 & 97.6 & 22.1 & 95.4 \\
 &0.001&41.5 & {\bf 68.8} & 43.5 & 42.2 & 83.2 & {\bf 89.8} & 87.1 & 85.4 & 89.7 & {\bf 97.8} & 93.2 & 90.4 \\
 &0.005&34.6 & 68.8 & 23.9 & 36.0   & 82.7 & 22.1 & 68.7 & 88.1   & 81.1 & 12   & 64.4 & 80.2 \\
 \hline \hline
 \multirow{3}{*}{\rotatebox{90}{command}}
 &0.0002&8.1 & 24.9 & 9.5 & 7.6 & 33.9 & 38.2 & 36.2 & 33.4 & 65.8 & {\bf 76.7} &  3.7 & 66.8 \\
 &0.001&7.5 & {\bf 26.1} & 8.0   & 8.2 & 33.5 & {\bf 42.9} & 35.5 & 33.7 & 53.6 & 71.8 & 27.9 & 51.9 \\
 &0.005&7.3 & 25.7 & 6.2 & 6.5 & 33.0   & 17.0   & 28.9 & 34.8 & 32.1 & 35.2 & 17.2 & 32.9 \\
 \hline \hline
 \multirow{3}{*}{\rotatebox{90}{fsd}}
 
  &0.0002&9.7 & 15.3 & 10.3 &  9.0   & 22.9 & 23.1 &  2.3 & 27.9 & 40.1 & 38.8 &  1.6 & 42.0   \\
  &0.001&9.8 & 16.7 & 10.4 & 10.6 & 24.2 & 27.4 & 13.1 & {\bf 31.1} & 38.9 & {\bf 44.9} &  2.1 & 42.3 \\
 &0.005&9.0   & {\bf 17.4} &  5.5 & 10.2 & 24.2 & 28.8 & 16.9 & 30.4 & 25.0   & 31.5 & 17.0   & 33.2 
 \\ \hline
\end{tabular}
    \label{tab:my_label}
\end{table*}

We provide the results in Tab.~\ref{tab:my_label} averaged over $10$ runs, for each of the runs, the same data split, DN initialization and parameters are used across all TFRs to allow exact performance comparisons. We also provide the results across various learning rate to perceptually measure the sensitivity of each method to this parameter. The first key observation is that the learnable methods are much more sensitive to the learning rate than when using a fixed TFR. Nevertheless, the proposed $\K$-transform is able to outperform all methods across the datasets and for any DN. This comes from the extrem adaptivity of the produced TFR. Notice that the fixed morlet TFR reaches reasonable accuracy especially on speech data without noise (audio MNIST). This is another key feature of learnable TFR, the ability to learn more robust representations. Another key observation comes from the ability of the proposed method to reach state-of-the-art performances while leveraging a simple (few layer) DN in the Linear Joint Scattering case.

\section{Conclusions}
\label{sec:conc}

We proposed a novel approach to learn generic WVD based TFR, derived an efficient implementation and demonstrated its ability to outperform standard and other learnable TFR techniques across dataset and architecutre settings. 
In addition of learning any desired TFR, our framework is interpretable, telling what type of TFR is, and access to direct parameters allow to exactly position a learned transform among the standard ones.


\section*{Broader Impact}

In this paper we propose a novel learnable super-resolution time-frequency resolution. The algorithm adapts to the data and task at hand and is thus not specialized to any type of settings. We validated the method on various public dataset open for academic purposes. We also highlighted the key properties of the transform that are general. Finally the interpretability of the method should make this paper of great interested for anyone aiming to have a end to end pipeline on time-series.

\section*{Acknowledgments}
This work was partly granted by 
NSF grants CCF-1911094, IIS-1838177, and IIS-1730574; 
ONR grants N00014-18-12571 and N00014-17-1-2551;
AFOSR grant FA9550-18-1-0478; 
DARPA grant G001534-7500; and a 
Vannevar Bush Faculty Fellowship, ONR grant N00014-18-1-2047 for Randall Balestriero and Richard Baraniuk, and partly granted by
ANR-18-CE40-0014 SMILES, ANR ADSIL Artificial Intelligence Chair, IUF, and MI CNRS MASTODONS (\url{http://sabiod.org}) for Herve Glotin.

\small

\bibliography{BIBLIO}
\bibliographystyle{plain}

\clearpage
\appendix

\Huge
\begin{center}
Supplementary Material    
\end{center}
\normalsize

This appendix proposes to first review the implementation details and visual resutls of the paper, studying the learned filters, we conclude with all the proofs of the theoretical results.

\section{DN topology}
\label{appendix:architecture}

We leverage a time translation covariant form of the proposed learnable model as we aim at solving classification tasks based on audio clips. Hence the representation should be translation invariant. We keep the unconstrained frequency dimensions and thus do not impose any frequency shift invariance as in the Cohen class family of representations. The kernels are parametrized as given in the main text and the networks are given as follows:
       \begin{verbatim}
       TF: any representation (morlet, lwvd, ...)
       the first mean(3) represents time pooling
       
       
       - onelayer_nonlinear_scattering
            input = T.log(TF.mean(3).reshape([N, -1])+0.1) 
            Dropout(0.3)
            Dense(256)
            BatchNormalization([0])
            LeakyReLU
            Dropout(0.1)
            Dense(n_classes)
            
        
        - onelayer_nonlinear_scattering:
            input = T.log(TF.mean(3).reshape([N, -1])+0.1)
            Dropout(0.1)
            Dense(n_classes)
        
        - joint_linear_scattering:
            feature = T.log(TF.mean(3).reshape([N, -1])+0.1)
            
            input = T.log(TF+0.1)
            Conv2D(64, (32,16))
            BatchNormalization([0,2,3])
            AbsoluteValue
            Concatenate(AbsoluteValue, feature)
            Dropout(0.1)
            Dense(n_classes)
       \end{verbatim}           
all training are done with the Adam optimizer, same initialization and data splitting.

\section{Additional Figures}
\label{appendix:figures}

We represent in this section the leaned filters/kernels $\Phi$ applied on the smoothed pseudo Wigner-Ville distribution, for clarity we only depict one every 4 filters, concatenated horizontally. We do so for three dataset and provide analysis in the caption of each figures.

\subsection{Samples of learnt filters}

We propose in Fig.~\ref{fig:extra} and Fig.~\ref{fig:extra2} and Fig.~\ref{fig:extra3} the filters after learning for each dataset with their analysis.

\begin{figure}[ht]
    \centering
    \includegraphics[width=0.9\linewidth]{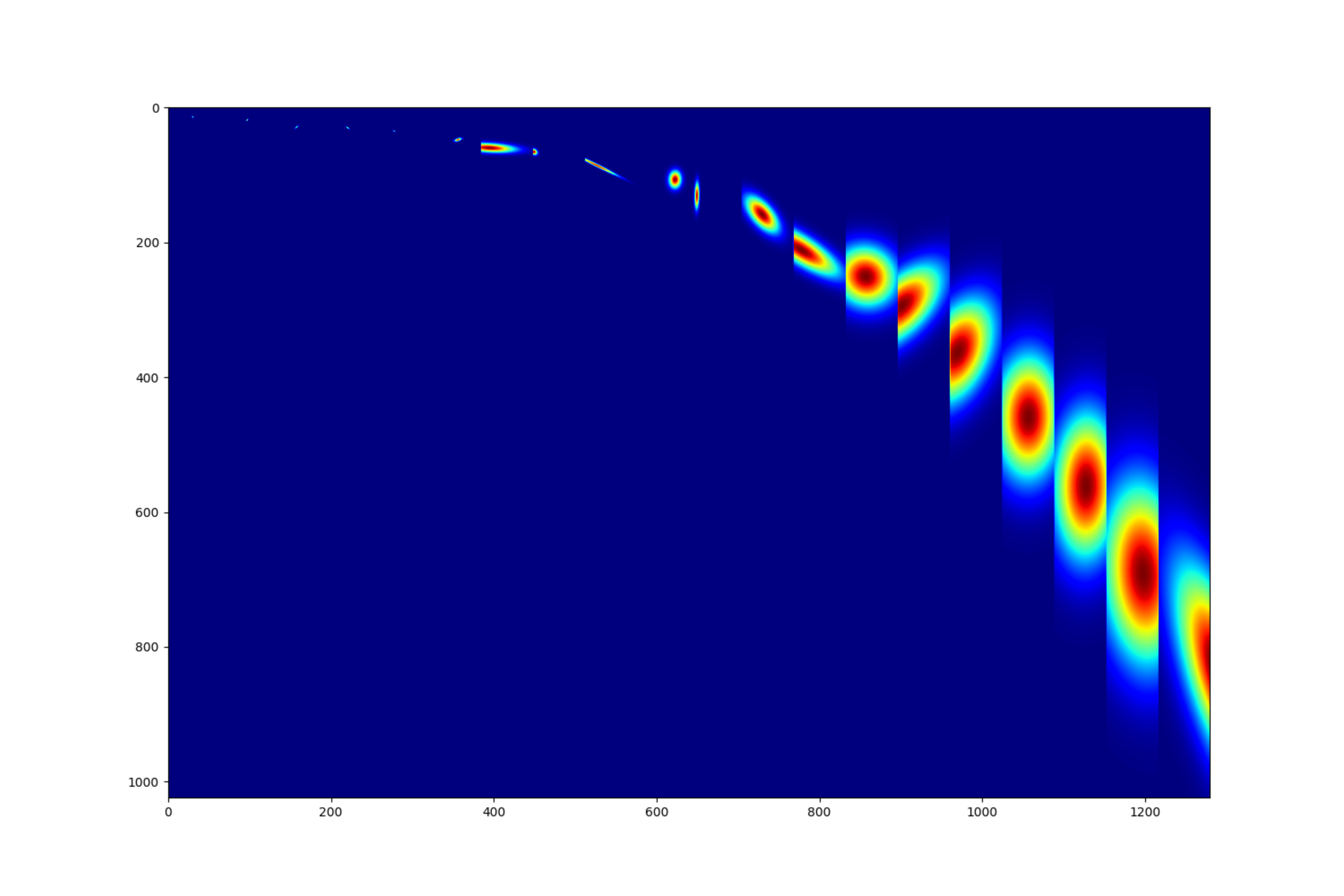}
    \caption{Audio MNIST: this dataset deals with spoken digit classification. A few key observations: the high frequency filters tend to take an horizontal shape greatly favoring frequency resolution, for some of the filters the time resolution is also very high (reaching super-resolution) while others favor local translation invariance. For all the medium to low frequency filters, great frequency and time invariance is preferred (large gaussian support) with a slight chirpness for the medium frequency filters. The low frequency filters tend to favor time resolution.}
    \label{fig:extra}
\end{figure}
\begin{figure}[ht]
    \centering
    \includegraphics[width=0.9\linewidth]{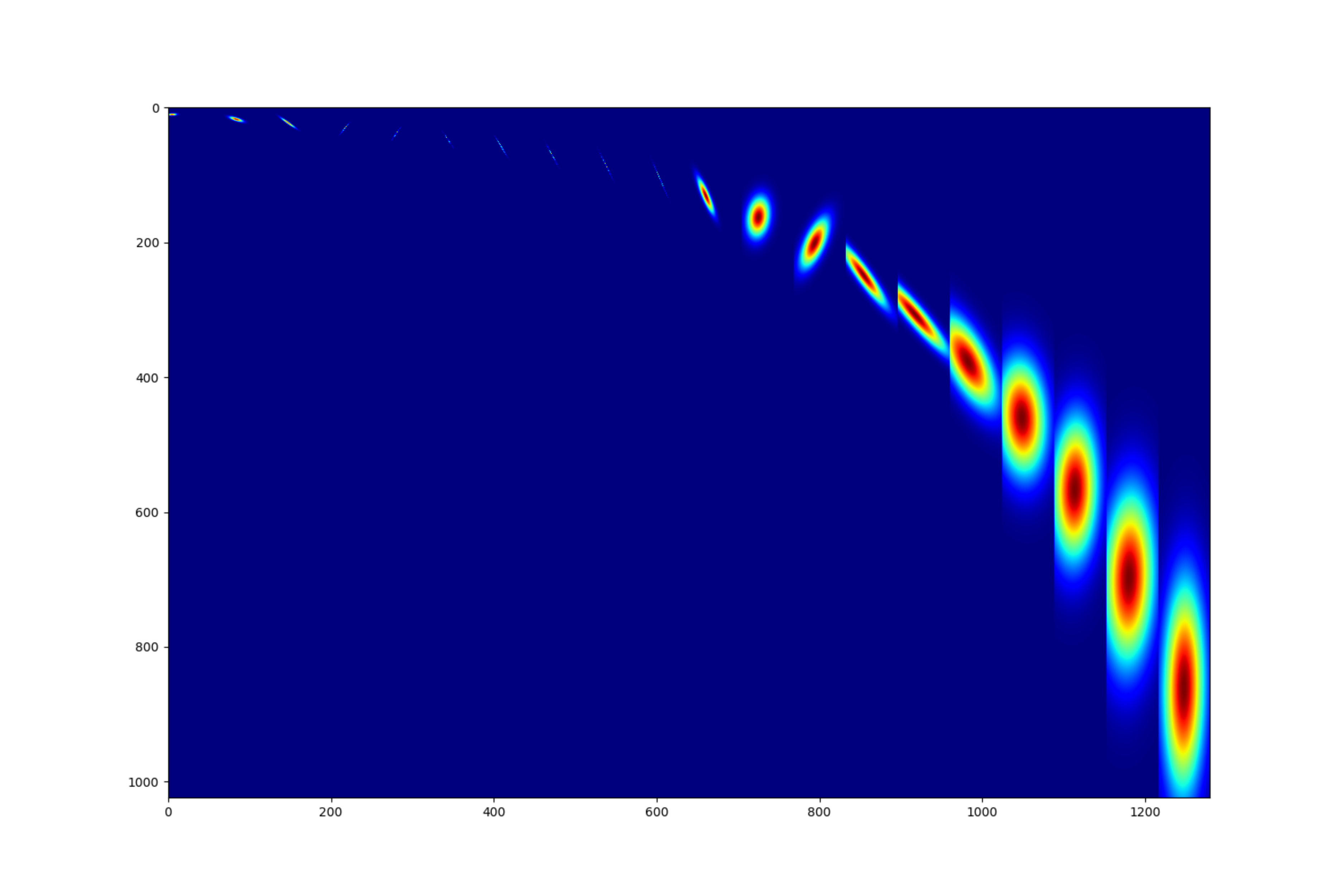}
    \caption{FreeSound: this dataset contains various different classes ranging on different frequencies and without an a priori prefered form of the events in the WV space. As opposed to the AudioMNIST case, we can see that the kernels tend to have smaller covariance (support) hence preferring time and frequency resolution to invariance. This becomes especially true for the high frequency atoms. We also see the clear chirpness for the medium/high frequency kernels with a specific (-30) angle, with decreasing slope. This might be specific to some particular events involving moving objects such as train, cars and so on.}
    \label{fig:extra2}
\end{figure}
\begin{figure}[ht]
    \centering
    \includegraphics[width=0.9\linewidth]{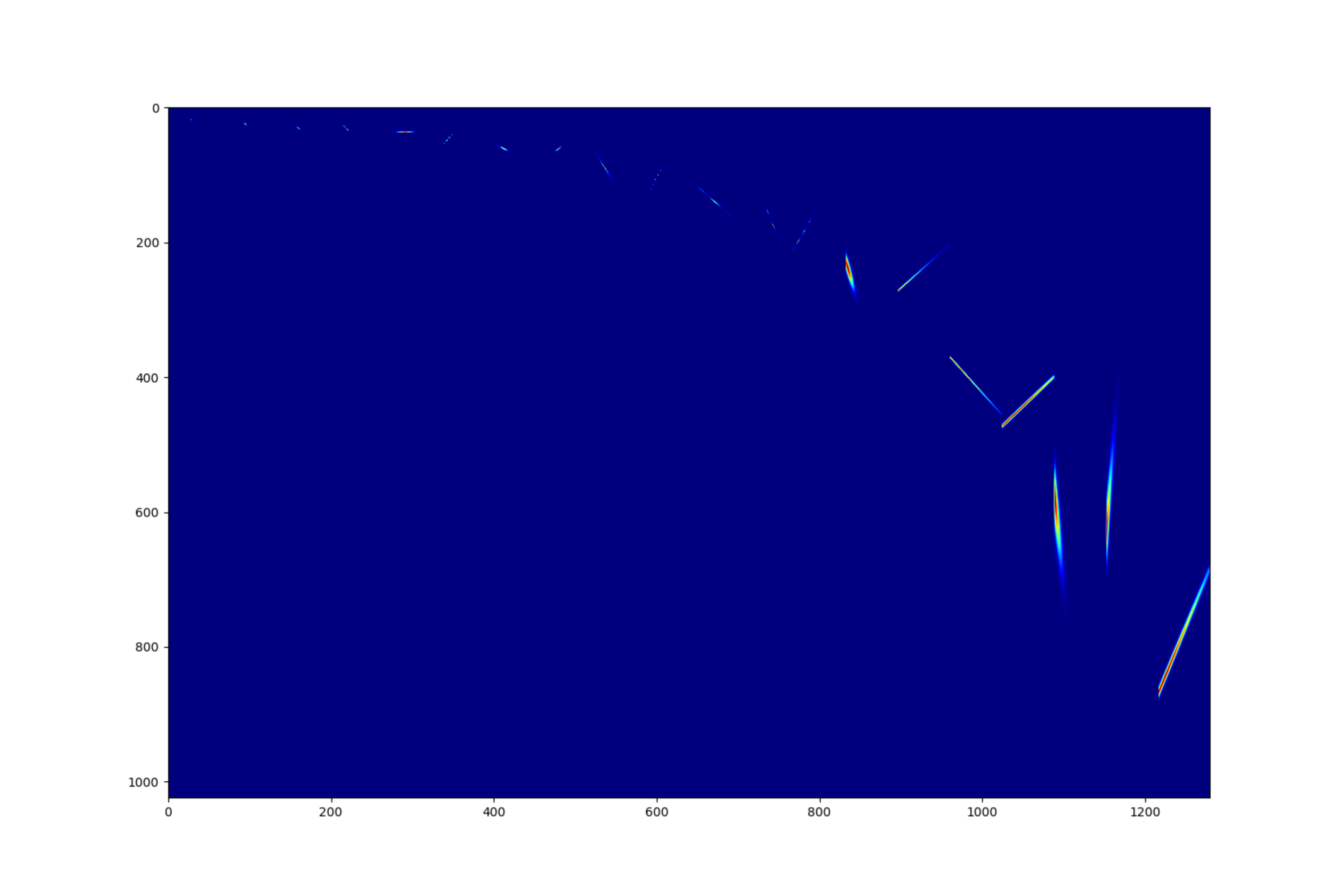}
    \caption{Bird: This dataset proposes to predict the presence or absence of a bird in short audio clips. A priori, detection of such events heavily relies on chirps, the characteristic sound of birds with increasing/decreasing frequency over time. We can see from the learned filters how the learned representation indeed focuses on such patterns, reach super-resolution and thus extreme sensitivity to the time and frequency position of the events, in particular for medium to low frequency kernels. For high frequency kernels, the time and frequency resolution is further increased.}
    \label{fig:extra3}
\end{figure}

\section{Proofs}

In this section we present in details all the proofs of the main paper results, as well as providing some additional ones.

\subsection{Proof of Cor.~\ref{cor:lipschitz_theta}}
\label{proof:lipschitz_theta}

Let first prove the general case with arbitrary kernels
\begin{lemma}
\label{lemma:lipschitz}
The norm of the difference of two representations obtained from kernel $\Phi$ and $\Phi'$ is bounded above as $
    \| \K_{x,\Phi} - \K_{x,\Phi'} \|_{L^2(\mathbb{R}^2)} \leq \|x \|^2_{L^2(\mathbb{R})}  \|\Phi - \Phi' \|_{L^2(\mathbb{R}^4)}.
$
with $\|\Phi - \Phi' \|_{L^2(\mathbb{R}^4)}= \sqrt{\int_{t,f}\|(\Phi - \Phi')[t,f]\|^2_{L^2(\mathbb{R}^2)}}$.
\end{lemma}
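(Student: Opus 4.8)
The plan is to reduce the statement to two ingredients: the bilinearity of the $\K$-transform in its kernel argument together with Cauchy--Schwarz, and Moyal's identity for the Wigner--Ville distribution. First I would note that, directly from the definition $\K_{x,\Phi}(t,f)=\langle \WV_{x},\Phi[t,f]\rangle$, the transform is linear in $\Phi$, so that for every fixed $(t,f)$ we have
\begin{align*}
\K_{x,\Phi}(t,f)-\K_{x,\Phi'}(t,f)=\big\langle \WV_{x},(\Phi-\Phi')[t,f]\big\rangle,
\end{align*}
where the inner product is taken in $L^2(\mathbb{R}^2)$ over the integration variables $(\tau,\omega)$ of $\WV_{x}$.

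Next I would apply the Cauchy--Schwarz inequality to this pointwise (in $(t,f)$) inner product, yielding
\begin{align*}
\big|\K_{x,\Phi}(t,f)-\K_{x,\Phi'}(t,f)\big|
\leq \|\WV_{x}\|_{L^2(\mathbb{R}^2)}\,\big\|(\Phi-\Phi')[t,f]\big\|_{L^2(\mathbb{R}^2)}.
\end{align*}
Squaring both sides and integrating over $(t,f)\in\mathbb{R}^2$ (justified by Tonelli, as all integrands are nonnegative), the factor $\|\WV_{x}\|_{L^2(\mathbb{R}^2)}^2$ pulls out of the integral because it does not depend on $(t,f)$, and the remaining integral is exactly $\|\Phi-\Phi'\|_{L^2(\mathbb{R}^4)}^2$ by the definition of that norm given in the statement. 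Taking square roots gives
\begin{align*}
\|\K_{x,\Phi}-\K_{x,\Phi'}\|_{L^2(\mathbb{R}^2)}\leq \|\WV_{x}\|_{L^2(\mathbb{R}^2)}\,\|\Phi-\Phi'\|_{L^2(\mathbb{R}^4)}.
\end{align*}

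The final step, and the one carrying the actual content, is to identify $\|\WV_{x}\|_{L^2(\mathbb{R}^2)}$ with $\|x\|_{L^2(\mathbb{R})}^2$. This is Moyal's identity (the unitarity of the Wigner--Ville transform up to squaring): $\int\!\int |\WV_{x}(t,\omega)|^2\,dt\,d\omega=\big(\int|x(t)|^2\,dt\big)^2=\|x\|_{L^2(\mathbb{R})}^4$, so that $\|\WV_{x}\|_{L^2(\mathbb{R}^2)}=\|x\|_{L^2(\mathbb{R})}^2$. Substituting this into the bound above yields the claimed inequality. I expect the only nonroutine point to be invoking Moyal's identity correctly; everything else is a direct application of Cauchy--Schwarz and interchange of integration, so I would state Moyal's identity explicitly (with a citation to a standard time-frequency reference) and verify the exponent so that the squared Wigner--Ville energy matches $\|x\|^4$ rather than $\|x\|^2$.
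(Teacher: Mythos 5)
Your argument is correct and follows essentially the same route as the paper's proof: linearity of the $\K$-transform in $\Phi$, a pointwise Cauchy--Schwarz bound, integration over $(t,f)$, and Moyal's identity $\|\WV_{x}\|_{L^2(\mathbb{R}^2)}=\|x\|_{L^2(\mathbb{R})}^2$ to conclude. Your explicit care with the exponent in Moyal's identity is a welcome addition, since the paper states that identity with a notational slip in the norms.
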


\begin{proof}
First, one can easily derive  $\|\WV_{x}\|_{L^2(\mathbb{R})}= \|x \|_{L^2(\mathbb{R}^2)}^2$ (see for example \cite{cordero2018sharp}). Given this, and the definition of the $\K$-transform, we obtain that
\begin{align*}
    \| \K_{x,\Phi}-\K_{x,\Phi'} \|_{L^2(\mathbb{R}^2)} = &\sqrt{ \int_{t,f} (\langle \WV_{x}, \Phi[t,f]\rangle_{L^2(\mathbb{R}^2)} - \langle \WV_{x}, \Phi'[t,f]\rangle_{L^2(\mathbb{R}^2)})^2}\\
    =&\sqrt{ \int_{t,f} \langle \WV_{x}, \Phi[t,f]-\Phi'[t,f]\rangle_{L^2(\mathbb{R}^2)}^2}\\
    \leq &\sqrt{ \int_{t,f} \| \WV_{x}\|^2_{L^2(\mathbb{R}^2)}\| \Phi[t,f]-\Phi'[t,f]\|^2_{L^2(\mathbb{R}^2)}}\;\;\;\text{Cauchy-Schwarz Inequality}\\
    =&\| \WV_{x}\|_{L^2(\mathbb{R}^2)}\sqrt{ \int_{t,f} \| \Phi[t,f]-\Phi'[t,f]\|^2_{L^2(\mathbb{R}^2)}}\\
    =&\| x\|^2_{L^2(\mathbb{R})}\sqrt{ \int_{t,f} \| \Phi[t,f]-\Phi'[t,f]\|^2_{L^2(\mathbb{R}^2)}}\\
    =&\| x\|^2_{L^2(\mathbb{R})} \| \Phi-\Phi'\|_{L^2(\mathbb{R}^4)}
\end{align*}
\end{proof}

Now the proof for the special case of a 2D Gaussian kernel follows the exact same procedure as the one above for the Lipschitz constant of the general $\K$-transform.
\begin{proof}
In the same way we directly have
\begin{align*}
    \| \K_{x,\Phi}-\K_{x,\Phi'} \|_{L^2(\mathbb{R}^2)} \leq &\| x\|^2_{L^2(\mathbb{R})}\sqrt{ \int_{t,f} \| \Phi[t,f](\theta)-\Phi[t,f](\theta')\|^2_{L^2(\mathbb{R}^2)}}
\end{align*}
now the only different is that we need a last inequality to express the distance in term of the $\theta$ parameters and not the kernels $\Phi$. To do so, we leverage the fact that the parametric kernel is a $2$-dimensional Gaussian with zero mean and unit variance and leverage the following result:
\begin{align*}
    \|f(\theta)-f(\theta') \| \leq \max_{u} \|\nabla_f(u)\| \|\theta - \theta' \|
\end{align*}
and simply denote by $\kappa$ the maximum of the Gaussian gradient norm, leading to the desired result by setting $f$ the $2$-dimensional Gaussian.
\end{proof}

\subsection{Proof of Prop.~\ref{prop:universal_gabor}}
\label{proof:universal_gabor}

First we prove the more general result which follows.

\begin{lemma}
\label{lemma:universal}
For any TFR or signal adapted TFR, there exists a kernel $\Phi$ such that $\K_{x,\Phi}$ (recall (\ref{eq:transform})) is equal to it.
\end{lemma}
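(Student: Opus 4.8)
The plan is to reduce the claim to the classical fact that the Wigner--Ville distribution is a \emph{complete} quadratic representation: the map sending a signal's instantaneous autocorrelation to $\WV_x$ is an invertible Fourier isometry, so every quadratic time--frequency representation is automatically a \emph{linear} functional of $\WV_x$, and therefore a $\K$-transform for a suitable kernel.

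First I would observe that all the representations in question --- spectrogram, scalogram, chirplet transform, every member of Cohen's class, and, for a fixed signal, each signal-adapted variant --- are bilinear in the signal. I can therefore write any such target representation $T_x(t,f)$ in the general bilinear form
\begin{align*}
T_x(t,f)=\int_{\R}\int_{\R} k_{t,f}(s,s')\,x(s)\,x^*(s')\,ds\,ds'
\end{align*}
for some family of kernels $k_{t,f}$. For a signal-adapted TFR the filter selection is fixed once $x$ is fixed, so $k_{t,f}$ --- and hence the $\Phi$ produced below --- is allowed to depend on $x$; this is precisely why such representations can lie outside Cohen's class yet still be reachable.

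Next I would pass to the center--lag coordinates $u=(s+s')/2$, $\tau=s'-s$, under which $x(s)\,x^*(s')$ becomes the instantaneous autocorrelation $R_x(u,\tau)=x(u-\tfrac{\tau}{2})\,x^*(u+\tfrac{\tau}{2})$ appearing inside (\ref{eq:WV}). Since $\WV_x(u,\omega)$ is the Fourier transform of $R_x(u,\cdot)$ in the lag variable and this transform is an invertible $L^2$ isometry, Fourier inversion recovers $R_x$ from $\WV_x$. Substituting this inversion into the bilinear form above converts the integral against $R_x$ into an integral against $\WV_x$, i.e.\ a linear functional of $\WV_x$. Reading off the kernel of that linear functional \emph{defines} the sought $\Phi[t,f]$: it is the change-of-variables image of $k_{t,f}$ composed with an inverse Fourier transform in the lag variable, so that $T_x(t,f)=\langle \WV_x,\Phi[t,f]\rangle=\K_{x,\Phi}(t,f)$, which is exactly the asserted identity.

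With the general lemma in hand, Prop.~\ref{prop:universal_gabor} then follows by the routine verification that for the Gabor, Gabor--wavelet (Morlet) and chirplet transforms the resulting $\Phi[t,f]$ is a $2$D Gaussian, with $\mu(t,f)$ encoding the atom's time--frequency center and $\Sigma(t,f)$ (its off-diagonal $\rho$ giving the chirpness) encoding the bandwidths --- identifications already tabulated in the cited literature. The main obstacle is making ``any TFR'' precise --- namely restricting to the quadratic class and allowing $\Phi$ to depend on $x$ for adapted TFRs --- together with justifying the Fourier inversion so that each $\Phi[t,f]$ genuinely lies in $L^2(\R^2)$ (or at least in the distributional sense in which the inner product is defined); once completeness of $\WV_x$ is invoked, the remaining Gaussian bookkeeping for the proposition is straightforward.
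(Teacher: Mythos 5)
Your proof is correct, but it takes a genuinely different route from the paper's. The paper proves the lemma in one line via Moyal's formula, $\bigl|\int x(t)y^*(t)\,dt\bigr|^2=\langle \WV_x,\WV_y\rangle$: any TFR obtained as the squared modulus of an atomic decomposition $|\langle x, y_{t,f}\rangle|^2$ (spectrogram, scalogram, chirplet, and their signal-adapted variants) is realized by simply taking $\Phi[t,f]=\WV_{y_{t,f}}$. That construction is concrete, automatically yields a real-valued kernel in $L^2$ whenever the atoms $y_{t,f}$ are in $L^2$, and feeds directly into Prop.~\ref{prop:universal_gabor}, since the WV of a Gabor/Morlet/chirplet atom is exactly the $2$D Gaussian of (\ref{eq:mu_cov}). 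Your argument instead invokes the completeness of the WV as a quadratic representation: every bilinear form $\int\int k_{t,f}(s,s')x(s)x^*(s')\,ds\,ds'$ becomes, after the center--lag change of variables and Fourier inversion in the lag variable, a linear functional of $\WV_x$. This is strictly more general --- it covers quadratic TFRs that are \emph{not} squared moduli of linear filter outputs (e.g.\ general Cohen-class members, or the WV itself), which the Moyal route cannot reach --- but at the price you correctly flag: the resulting $\Phi[t,f]$ may only exist distributionally rather than in $L^2(\R^2)$ as the paper's definition of the $\K$-transform requires, whereas the paper's choice $\Phi[t,f]=\WV_{y_{t,f}}$ sidesteps that issue entirely. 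Both arguments establish the lemma for the representations the paper actually cares about, and both correctly note that "any TFR" must implicitly mean the quadratic class with $x$-dependent kernels permitted for adapted transforms.
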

\qq

\begin{proof}
The proof is a direct application of Moyal Theorem which states that given a signal $x$ and a filter $y$ the application of the filter onto the signal and squaring the result can be expressed as the inner product between the Wigner-Ville transforms of the signal and filter as in
\begin{align*}
    |\int_{-\infty}^{\infty}x(t)y^*(t)dt |^2 = \int_{-\infty}^{\infty}\int_{-\infty}^{\infty}\WV_x(\tau, \omega)\WV_{y}(\tau,\omega) d\tau d\omega.
\end{align*}
From the above we can see that any time frequency representation (time invariant or not) can be recovered simply by setting $\Phi[t,f]=\WV_{y}$ of some desired filter $y$.
\end{proof}

Now for the special case of the Gabor based transform we can leverage the above result and proof with the following.

\begin{proof}
This result is a direct application of the analytically derived kernels from \cite{nuttall1988wigner,flandrin1990affine,jeong1990variable,baraniuk1996wigner,talakoub2010approximating} for various analytical time-frequency representations. For the Gabor transform, Morlet wavelet and Morlet based Chirplet atom, all the analytically derived kernels that are applied onto a Wigner-Ville distribution as a parametric form of the $2$-dimensional Gaussian kernel proving the result.
\end{proof}

\subsection{Proof of Theorem~\ref{thm:continuity}}
\label{proof:continuity}
\begin{proof}
The proof follows directly from the above result. In fact, we obtained above that the Lipschitz constant of the $\K$-transform w.r.t. the $\Phi$ parameter is obtained by $\|x\|^2_{L^2(\mathbb{R})}$. Thus, the Lipschitz continuity implies continuity of the transform w.r.t. the $\Phi$ parameters which proves the result (see for example \cite{kreyszig1978introductory}).
\end{proof}

\subsection{Proof of Lemma on Invariant}
\label{proof:invariant}
\begin{proof}

We provide the proof for each case below:
\begin{itemize}

    \item Time translation of the signal $x$ by $\tau$ to have $y(t) = x(t-\tau)$ leads to
    \begin{align*}
        \WV_{y}(t,\omega)&=\int_{-\infty}^{\infty}y(t+\frac{\tau}{2})y^*(t-\frac{\tau}{2})e^{-i\tau\omega}d\tau
        \\
        &=\int_{-\infty}^{\infty}x(t+\frac{\tau}{2}-\tau)x^*(t-\frac{\tau}{2}-\tau)e^{-i\tau\omega}d\tau
        \\
        &=\WV_{x}(t-\tau,\omega)
    \end{align*}
    Thus, the time equivariance of the representation defined as $\K_{y,\Phi}(t,f)=\K_{x,\Phi}(t-\tau,f)$ is obtained iff
    \begin{align*}
        \K_{y,\Phi}(t,f)=&\langle \WV_{y}, \Phi[t,f]\rangle
        =\langle \WV_{x}, \Phi[t,f](.+\tau,.)\rangle=\langle \WV_{x}, \Phi[t-\tau,f]\rangle \\
        &\iff \Phi[t,f](.+\tau,.) = \Phi[t-\tau,f]
    \end{align*}
    as a result the above condition demonstrates that filters of different times $\Phi[t-\tau,f], \forall \tau $are just time translations of each other giving the desired result.

    \item Frequency modulation/shift with frequency $\omega_0$ to have $y(t) = x(t)e^{i\omega_0 t}$ leads
    \begin{align*}
        \WV_{y}(t,\omega)&=\int_{-\infty}^{\infty}y(t+\frac{\tau}{2})y^*(t-\frac{\tau}{2})e^{-i\tau\omega}d\tau\\
        &=\int_{-\infty}^{\infty}x(t+\frac{\tau}{2})e^{i\omega_0(t+\frac{\tau}{2}) }x^*(t-\frac{\tau}{2})e^{-i\omega_0(t-\frac{\tau}{2}) }e^{-i\tau\omega}d\tau\\
        &=\int_{-\infty}^{\infty}x(t+\frac{\tau}{2})e^{i\omega_0\tau) }x^*(t-\frac{\tau}{2})e^{-i\tau\omega}d\tau=W_{x}(t-\tau,\omega)\\
        &=\int_{-\infty}^{\infty}x(t+\frac{\tau}{2})x^*(t-\frac{\tau}{2})e^{-i\tau(\omega+\omega_0)}d\tau=W_{x}(t-\tau,\omega)\\
        &=\WV_{x}(t,\omega+\omega_0)
    \end{align*}
    now leveraging the same result that for the time equivariance, we obtain the desired result.
    
    \item For completeness, we also demonstrate here how the Wigner-Ville behaves under rescaling of the input $y(t)=x(t/a)$ 
    \begin{align*}
        \WV_{y}(t,\omega)&=\int_{-\infty}^{\infty}y(t+\frac{\tau}{2})y^*(t-\frac{\tau}{2})e^{-i\tau\omega}d\tau\\
        &=\int_{-\infty}^{\infty}x((t+\frac{\tau}{2})/a)x^*((t-\frac{\tau}{2})/a)e^{-i\tau\omega}d\tau\\
        &=\int_{-\infty}^{\infty}x(t/a+\frac{\tau}{2a}))x^*(t/a-\frac{\tau}{2a})e^{-i\tau\omega}d\tau\\
        &=\int_{-\infty}^{\infty}x(t/a+\frac{\tau}{2a}))x^*(t/a-\frac{\tau}{2a})e^{-i\tau\omega}d\tau\\
        &=a\int_{-\infty}^{\infty}x(t/a+\frac{\tau}{2}))x^*(t/a-\frac{\tau}{2})e^{-i\tau a\omega}d\tau\\
        &=a \WV_{x}(t/a,a\omega)
\end{align*}
\end{itemize}
\end{proof}

\subsection{Proof of Prop~\ref{prop:invariance}}
\label{proof:invariance}
\begin{proof}
The proof is obtained by using the fact that the maximum of the $2$-dimensional Gaussian with given covariance matrix will always be smaller than $\frac{1}{\det (\Sigma)}$ and is always nonnegative, and then simply upper bounding the norm difference by this value times the representation. Now that we have the upper bound in term of $\|WV_{x}-WV_{D(X)} \|_{L^2(\mathbb{R}^2)}$ simply apply the Lipschitz constant inequality with $\kappa$ the constant of the WVD which exists as long as $x$ is bounded which is the case as we have $\mathbb{L}^2(\mathbb{R})$ leading to the desired result.
\end{proof}

\subsection{Proof of Lemma~\ref{thm:fast}}
\label{proof:fast}

\begin{proof}
We will obtain the desired result by unrolling the following equations and see that it coincides with the one of the Theorem with the given kernel applied on the WV representation:
    \begin{align*}
        \int_{-\infty}^{\infty}& \ST_{x}(t,\omega+\frac{\eta}{2})\ST_{x}^*(t,\omega-\frac{\eta}{2})\mathcal{F}_{g}(\eta)e^{j2 \pi \eta t}d\eta\\
        =&\int_{-\infty}^{\infty} \ST_{x}(t,\omega+\frac{\eta}{2})\ST_{x}^*(t,\omega-\frac{\eta}{2})\int_{-\infty}^{\infty}g(\xi)e^{-i\xi\eta}d\xi e^{j2 \pi \eta t}d\eta\\
        =&\int_{-\infty}^{\infty} \int_{-\infty}^{\infty} w^*(t-\tau)x(\tau)w(t-\theta)x^*(\theta)e^{-i \omega(\tau-\theta)}\int_{-\infty}^{\infty}g(\xi)\int_{-\infty}^{\infty} e^{j2 \pi \eta (t-\frac{\tau}{2}-\frac{\theta}{2}-\xi)}d\eta d\xi d\tau d\theta\\
        =&\int_{-\infty}^{\infty} \int_{-\infty}^{\infty}\int_{-\infty}^{\infty}
        w^*(t-\tau)x(\tau)w(t-\theta)x^*(\theta)e^{-i \omega(\tau-\theta)}g(\xi)\delta(t-\frac{\tau}{2}-\frac{\theta}{2}-\xi)d\xi d\tau d\theta \\
        =&\int_{-\infty}^{\infty}\int_{-\infty}^{\infty}
        w^*(t-\tau)x(\tau)w(\tau+2\xi-t)x^*(2t-\tau-2\xi)g(\xi)e^{-i \omega(2\tau-2t+2\xi)}d\xi d\tau \\
        =&\int_{-\infty}^{\infty}g(\xi)\int_{-\infty}^{\infty}
        w^*(\xi-\mu/2)x(t+\mu/2-\xi)w(\xi+\mu/2)x^*(t-\mu/2-\xi)e^{-i \omega(2\mu)}d\tau d\xi\\
        &\hspace{10cm}\;\;\;(\mu=2(\tau-t+\xi))  \\
        =&\int g(\xi)(W_{w}(\xi,.) \star W_{x}(t-\xi,.))(\omega)/2d\xi \implies \Pi'(\tau,\omega) = g(\tau) W_{w}(\tau,\omega)/2
    \end{align*}
with $\Pi'(t,\omega)=\frac{1}{\sqrt{2\pi}\sigma} e^{-\frac{t^2}{\sigma_{t}^2/(\sigma_{\omega}^2\sigma_{t}^2+1)}-\frac{\omega^2\sigma_{t}^2}{2}}$.

\end{proof}

\subsection{Proof of Thm.~\ref{thm:gaussianproduct}}
\label{proof:gaussianproduct}

\begin{proof}
From the above (lemma) result, it then follows directly that performing convolutions with two gaussians can be rewritten as a single Gaussian convolution with the given parameters based on both Gaussians.
\end{proof}

\section{Interferences}
\label{sec:interference}

Finally, we now consider the study of interference that can arise in the K-transform whenever multiple events occur in the signal $x$ \cite{sen2014uncertainty}.
In fact, the Gabor limit also known as the Heisenberg Uncertainty principle \cite{heisenbard1927}, corresponds to the optimal limit one can achieve in time and frequency resolution without introducing interference in the representation. In our case, no constraints are imposed onto $\Phi[t, f]$, and while learning will allow to reach any desired kernel for the task at hand, we propose some conditions that would prevent the presence of interference. In fact, there is a general necessary condition ensuring that the representation does not contain any interference.

\begin{prop}
\label{prop:interference}
A sufficient condition to ensure absence of interference in the K-transform is to have a nonnegative representation as in $\K_{x,\Phi}(t,f) \geq 0 ,\forall t,f$.
\end{prop}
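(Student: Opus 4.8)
The plan is to exploit the bilinear (quadratic) structure of the $\K$-transform together with the oscillatory nature of the Wigner--Ville cross-terms, and then to argue by contraposition that any genuine interference forces the representation strictly below zero. First I would make the notion of interference explicit. Writing a multi-component signal as $x=\sum_k c_k$ and using bilinearity of the Wigner--Ville map, $\WV_{x}=\sum_k \WV_{c_k}+\sum_{k\neq l}\WV_{c_k,c_l}$, where $\WV_{c_k,c_l}$ denotes the cross Wigner--Ville distribution. Since $\K_{x,\Phi}(t,f)=\langle \WV_x,\Phi[t,f]\rangle$ is linear in $\WV_x$, it inherits the same splitting
\[
\K_{x,\Phi}(t,f)=\sum_k\langle \WV_{c_k},\Phi[t,f]\rangle+\sum_{k\neq l}\langle \WV_{c_k,c_l},\Phi[t,f]\rangle,
\]
and I would identify the interference with the second sum $I(t,f)$, so that ``absence of interference'' is the statement $I\equiv 0$.

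The core step is the classical observation that each real cross-contribution $\WV_{c_k,c_l}+\WV_{c_l,c_k}=2\,\mathrm{Re}\,\WV_{c_k,c_l}$ is a localized envelope modulated by a complex exponential whose frequency is fixed by the time--frequency separation of $c_k$ and $c_l$; it therefore oscillates symmetrically about zero, its positive and negative lobes balancing, and this sign-alternating behavior is precisely the mechanism by which a real quadratic TFR takes negative values. Because the Gaussian kernel obeys $\Phi[t,f]\geq 0$, forming $\langle \WV_x,\Phi[t,f]\rangle$ only averages $\WV_x$ against a nonnegative weight and can never flip its sign; hence whenever $I\not\equiv 0$ and the auto-terms remain resolved, there is a point $(t,f)$ at which the smoothed oscillatory lobe is not cancelled and $\K_{x,\Phi}(t,f)<0$. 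Contrapositively, $\K_{x,\Phi}(t,f)\geq 0$ for all $(t,f)$ rules out such lobes and forces $I\equiv 0$, which is the claim.

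The main obstacle is definitional, and it is where I would spend the most care. One must fix ``interference'' to mean the sign-alternating oscillatory pattern responsible for negative excursions, rather than the mere presence of any nonzero cross-term: a sufficiently wide nonnegative kernel can attenuate an oscillation toward zero while preserving nonnegativity (the spectrogram regime), so under the naive reading the statement would fail. I would therefore pin the argument to a two-component model $x=c_1+c_2$ with Gaussian atoms, compute $\mathrm{Re}\,\WV_{c_1,c_2}$ explicitly, and show that its nonnegative Gaussian smoothing retains a strictly negative lobe precisely when the kernel covariance is small enough for the auto-terms to stay resolved --- the same regime the surrounding discussion ties to the Gabor/Heisenberg limit. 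This both quantifies the residual negativity and makes transparent the sense in which pointwise nonnegativity and interference are mutually exclusive.
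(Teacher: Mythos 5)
Your argument is essentially the paper's own: its proof likewise restricts to signals that are sums of oscillatory atoms, invokes the bilinearity of the Wigner--Ville distribution to isolate the cross-terms, observes that these are oscillatory and hence cannot be one-signed, and concludes that pointwise nonnegativity of $\K_{x,\Phi}$ excludes them. The caveat you raise---that a wide nonnegative kernel can attenuate a nonzero cross-term while keeping the representation nonnegative (the spectrogram regime), so the claim only holds under a careful definition of ``interference'' as the sign-alternating residual---is a real gap that the paper's two-sentence proof passes over, and your proposed two-component Gaussian computation is the right way to make the statement precise.
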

\qq

\begin{proof}
We only consider signals that can be expressed as a linear combination of oscillatory atoms (natural signals) \cite{mallat1989theory}. Due to the form of the Wigner-Ville distribution (with the quadratic term) the interference are oscillatory terms and thus can not be positive only. In short, there can not be a positive coefficient appearing due to an interference without its negative counterpart. As a result, a nonnegative representation can not have any interference. See (\cite{cohen1995time}) for more details and background on Wigner-Ville distribution and interference. 
\end{proof}

Lastly, thanks to the Gaussian form, we can directly obtain the shape and in particular area of the logon (the inverse of the joint time and frequency resolution) as follows.

\begin{prop}
\label{lemma:interference_area}
A sufficient condition to ensure absence of interference in the K-transform is to have the effective area of the 2D Gaussian greater that $\frac{1}{4\pi}$, that is, $\epsilon_{\rm T}'\epsilon_{\rm F}' \geq \frac{1}{4\pi}, \forall t,f$  with
\begin{align}
\sigma_{\rm T}' =& \sigma_{\rm T} \cos^2(\theta) + 2\rho \cos(\theta)\sin(\theta) + \sigma_{\rm F}\sin^2(\theta) \\
\sigma_{\rm F}' =&\sigma_{\rm T}\sin^2(\theta)-2\rho \cos(\theta)\sin(\theta)+\sigma_{\rm F}\cos^2(\theta),\\ 
\theta =& \frac{\arctan(\frac{2 \rho}{\sigma_{\rm T}-\sigma_{\rm F}})}{2}
\end{align}
\end{prop}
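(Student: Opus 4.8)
The plan is to reduce the statement to the nonnegativity criterion already established in Prop.~\ref{prop:interference}: if I can show that an effective area of at least $\frac{1}{4\pi}$ forces the smoothed representation to be pointwise nonnegative, then the absence of interference follows at once. The quantities $\sigma_{\rm T}'$, $\sigma_{\rm F}'$ and $\theta$ in the statement are simply the principal-axis description of the Gaussian kernel $\Phi[t,f]$, whose covariance matrix is $\Sigma(t,f)=\begin{pmatrix}\sigma_{\rm T}&\rho\\\rho&\sigma_{\rm F}\end{pmatrix}$ (recall (\ref{eq:mu_cov})); the chirpness $\rho$ tilts the Gaussian away from the time-frequency axes, so the genuine time and frequency resolutions are read off only after rotating into the eigenbasis.

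First I would diagonalize $\Sigma(t,f)$ by an orthogonal rotation $R(\theta)$. Requiring that the off-diagonal entry of $R(\theta)^{\!\top}\Sigma R(\theta)$ vanish gives, via the double-angle identity, the condition $\tan(2\theta)=\frac{2\rho}{\sigma_{\rm T}-\sigma_{\rm F}}$, that is, exactly the stated $\theta=\frac{1}{2}\arctan\big(\frac{2\rho}{\sigma_{\rm T}-\sigma_{\rm F}}\big)$. A direct expansion of the two diagonal entries of $R(\theta)^{\!\top}\Sigma R(\theta)$ then yields precisely $\sigma_{\rm T}'$ and $\sigma_{\rm F}'$ as written, so that in the rotated frame the kernel is a separable product of two one-dimensional Gaussians with spreads $\epsilon_{\rm T}'=\sqrt{\sigma_{\rm T}'}$ and $\epsilon_{\rm F}'=\sqrt{\sigma_{\rm F}'}$. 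This is routine; the one useful consistency check is that, since the eigenvalues multiply to the determinant, $\sigma_{\rm T}'\sigma_{\rm F}'=\det\Sigma(t,f)=\sigma_{\rm T}\sigma_{\rm F}-\rho^2$, so the effective area is rotation invariant and the hypothesis $\epsilon_{\rm T}'\epsilon_{\rm F}'\geq\frac{1}{4\pi}$ is a bound purely on $\det\Sigma(t,f)$, consistent with the determinant that already governs the perturbation bound of Prop.~\ref{prop:invariance}.

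The substantive step, and the main obstacle, is to connect the area bound to nonnegativity. The mechanism I would use is a factorization: whenever the effective area meets or exceeds the minimum-uncertainty value $\frac{1}{4\pi}$, the Gaussian kernel $\Phi[t,f]$ can be written as the convolution of a minimum-uncertainty Gaussian with an ordinary nonnegative Gaussian. By Moyal's theorem (recall Lemma~\ref{lemma:universal}) the minimum-uncertainty factor is itself the Wigner-Ville distribution of a genuine, possibly chirped, Gaussian atom, so convolving $\WV_x$ with it produces a (chirped) spectrogram, which is pointwise nonnegative; convolving further with the remaining nonnegative Gaussian preserves nonnegativity. Working in the rotated frame is what makes this factorization transparent, since there the kernel is a separable product of two one-dimensional Gaussians whose spreads are the principal-axis resolutions $\epsilon_{\rm T}'$ and $\epsilon_{\rm F}'$, and the hypothesis $\epsilon_{\rm T}'\epsilon_{\rm F}'\geq\frac{1}{4\pi}$ is precisely the statement that this area meets the minimum-uncertainty value and hence licenses the splitting. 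Once $\K_{x,\Phi}(t,f)\geq 0$ is secured in this way, Prop.~\ref{prop:interference} delivers the absence of interference. The delicate point I would state carefully is the normalization convention (variance versus standard deviation, and the factor of $2\pi$ in the frequency variable) so that the threshold comes out as $\frac{1}{4\pi}$ rather than some other constant.
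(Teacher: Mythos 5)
Your proof is correct, and the first half --- diagonalizing $\Sigma(t,f)$ by the rotation $R(\theta)$ with $\tan(2\theta)=\frac{2\rho}{\sigma_{\rm T}-\sigma_{\rm F}}$, reading off $\sigma_{\rm T}'$ and $\sigma_{\rm F}'$ as the diagonal entries of $R(\theta)^{\top}\Sigma R(\theta)$, and observing that the effective area only depends on $\det\Sigma(t,f)$ --- is exactly what the paper does. Where you genuinely diverge is in how the area threshold $\frac{1}{4\pi}$ is connected to the absence of interference. The paper simply cites Gabor's uncertainty principle as ``providing the minimal area of the Logon to ensure absence of interference'' and stops there; it never routes the argument through the nonnegativity criterion of Prop.~\ref{prop:interference}. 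You instead make that connection explicit: factor the Gaussian kernel as a minimum-uncertainty Gaussian (which, in the rotated frame, one checks is admissible precisely when $\sigma_{\rm T}'\sigma_{\rm F}'\geq\frac{1}{16\pi^2}$, i.e.\ $\epsilon_{\rm T}'\epsilon_{\rm F}'\geq\frac{1}{4\pi}$) convolved with a residual nonnegative Gaussian, identify the minimum-uncertainty factor via Moyal's theorem as the Wigner--Ville distribution of a chirped Gaussian atom so that the first convolution is a pointwise-nonnegative (chirped) spectrogram, and conclude by Prop.~\ref{prop:interference}. This is the classical Gaussian-smoothing nonnegativity argument, and it buys you a self-contained derivation that the paper's proof-by-citation does not supply; the price is the extra care you correctly flag about normalization conventions (variance versus spread, and the $2\pi$ placement in the frequency variable), together with the paper's own $\epsilon$-versus-$\sigma$ notational slip, which your reading $\epsilon'=\sqrt{\sigma'}$ resolves in the only way that makes the constant come out to $\frac{1}{4\pi}$.
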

\begin{proof}
The proof applies the Uncertainty principle which provides the minimal area of the Logon to ensure absence of interference (\cite{gabor1946theory}), this is then combined with the above result on the area of the Logon in the case of our $2$-dimension Gaussian to obtain the desired result. So first, the condition is that $\sigma_{\rm T}\sigma_{\rm F}\geq \frac{1}{4\pi}$, however in our case we also have the possible chirpness parameter. But we can obtain the new rotated covariance matrix s.t. the chirpness is removed by rotating the time-frequency place instead. This can be done easily (see for example \url{http://scipp.ucsc.edu/~haber/ph116A/diag2x2_11.pdf}) to obtain the new parameter as
\begin{align*}
    \sigma_{\rm T} =& \sigma_{\rm T} \cos^2(\theta) + 2\rho \cos(\theta)\sin(\theta) + \sigma_{\rm F}\sin^2(\theta) 
    \\ \sigma_{\rm F}=&\sigma_{\rm T}\sin^2(\theta)-2\rho \cos(\theta)\sin(\theta)+\sigma_{\rm F}\cos^2(\theta),\\ 
    \theta =& \frac{\arctan(\frac{2 \rho}{\sigma_{\rm T}-\sigma_{\rm F}})}{2}
\end{align*}
now the constraint can be expressed as
\begin{align*}
    \left(\sigma_{\rm T} \cos^2(\theta) + 2\rho \cos(\theta)\sin(\theta) + \sigma_{\rm F}\sin^2(\theta) \right)\left(\sigma_{\rm T}\sin^2(\theta)-2\rho \cos(\theta)\sin(\theta)+\sigma_{\rm F}\cos^2(\theta)\right)\geq \frac{1}{4\pi}
\end{align*}
\end{proof}

\section{Gaussian Truncation}
\label{appendix:truncate}

Notice however that  the STFT has to be done by padding the signal windows to allow interpolation, this is common when dealing with such transformations. 
Given some parameters, we convert them into discrete bins to get actual window sizes as follows
\begin{align*}
N_{\sigma}(\epsilon) = - g^{-1}_{\sigma}(\epsilon) \times 2 Fs
\end{align*}
with $g^{-1}_{\sigma}$ the inverse of the gaussian density distribution with $0$ mean and $\sigma$ standard deviation, taken only on the negative part of its support. As such, $N$ is a function that maps a given tolerance and standard deviation to the window length in bins s.t. the apodization window at the boundaries of this window are of no more than $\epsilon$.

\section{Example of K-transform parameterization}

We propose in Table~\ref{tab:summary_parameters} different configurations of the parameter $\theta$ corresponding to some standard and known TFRs.

\begin{table*}[t]
\caption{\small{Various hand picked K-transform parameters leading to known time invariant TFRs with their respective parameters. For the adaptive versions, such as the wavelet tree with various Gabor mother wavelet parameters $\sigma_0$, then at each time $t$, the corresponding parameter is any of the optimal one based on the desired criterion.}}
\centering
\setlength{\tabcolsep}{2pt}
    \def\arraystretch{1.22}%
\begin{tabular}{|l|c|c|c|c|c|}\cline{1-6}
&$\mu_{\rm time}(t,f)$ & $\mu_{\rm freq}(t,f)$ & $\sigma_{\rm time}(t,f)$ & $\sigma_{\rm freq}(t,f)$ & $\rho(t,f)$ (chirpness) \\ \cline{1-6}
Spectrogram & $t$      &          $f$       &   $\sigma_{t}$   & $\sigma_{t}^{-1} $ & $0$ \\ \hline
Melscale Spectrogram &  $t$      &  $2^{S(1-f/\pi)}$  &   $\sigma_{t}$   & $2^{S(f/\pi-1)}\sigma_{t}^{-1} $ & $0$ \\ \hline
Scalogram & $t$      &          $2^{S(1-f/\pi)}$       &  $2^{S(1-f/\pi)}\sigma_0$ & $ 2^{S(f/\pi-1)}\sigma_0^{-1}$ &    $0$   \\ \hline
Scattering Layer & $t$      &          $2^{S(1-f/\pi)}$       &  $s2^{S(1-f/\pi)}\sigma_0$ & $ 2^{S(f/\pi-1)}\sigma_0^{-1}$ &    $0$   \\ \hline
Chirpogram &  $t$      &          $2^{S(1-f/\pi)}$       &  $2^{S(1-f/\pi)}\sigma_0$ & $ 2^{S(f/\pi-1)}\sigma_0^{-1}$ &    $\rho(t,f) \neq 0$   \\ \hline
\end{tabular}
\label{tab:summary_parameters}
\end{table*}

\section{Computational Complexity}
\label{appendix:complexity}

The computational complexity of the method varies with the covariance matrix of the kernels, it will range from quadratic when the kernels tend toward a Delta function (as the representation computation bottleneck becomes the computation of the exact WV) to linear for large covariance matrices. In such case, the computation is done with spectrograms of small windows (and with almost no spectral frequency correlation) for the smoothed pseudo WV. In addition to those cases, the use of translation (in time and/or frequency) also greatly reduce computational costs as it allows for the use of convolutions into the K-transform computation. This allows to put all the computation bottleneck into the computation of the smoothed pseudo WV.

\section{Data set DOCC10 : Dyni Odontocete Click Classification, 10 species, ENS DATA CHALLENGE}
\label{sec:docc10}

This challenge is linked to the largest international bioacoustical challenge from Scripps Institute and LIS CNRS. The goal is to classify transients, clicks emitted as biosonar by 10 cetaceans species (odontocetes). The recordings of the clicks are from a post-processed subset of DCLDE 2018 challenge (9 species, \url{http://sabiod.org/DCLDE }), plus recordings of Cachalot (Physteter macrocephalus, Pm) done from ASV Sphyrna (\url{http://sphyrna-odyssey.com}) near Toulon, France. Each input signal is 8192 bins, at a sampling rate of 200 kHz. They each includes a click centered in the middle of the window in the case of the test set. The clicks in the training set are in various positions and background noises.

Challenge goals
The goal is to classify each click according to the corresponding emitting species. The 10 species are : (0) Gg: Grampus griseus- Risso's dolphin (1) Gma: Globicephala macrorhynchus- Short-finned pilot whale (2) La: Lagenorhynchus acutus- Atlantic white-sided dolphin (3) Mb: Mesoplodon bidens- Sowerby's beaked whale (4) Me: Mesoplodon europaeus- Gervais beaked whale (5) Pm: Physeter macrocephalus - Sperm whale (6) Ssp: Stenella sp.Stenellid dolphin (7) UDA: Delphinid type A - a group of dolphins (species not yet determined) (8) UDB: Delphinid type B - another group of dolphins (species not yet determined) (9) Zc: Ziphius cavirostris- Cuvier's beaked whale The metric is the MAP.

Data description
The recordings are from various subsea acoustic stations or autonomous surface vehicles (ASV). A part (9 classes) has been provided by the Scripps Institution of Oceanography for the 8th DCLDE Workshop. They consist of acoustic recordings from multiple deployments of high-frequency acoustic recording packages (Wiggins and Hildebrand, 2007) deployed in the Western North Atlantic (US EEZ) and Gulf of Mexico. It has a 100 kHz of bandwidth (200 kHz sample rate). Data were selected to cover multiple seasons and locations while providing high species diversity and call counts as described in \url{http://sabiod.org/DCLDE/challenge.html#highFreqData}. The initial data set is 3 To and weak labeled. DYNI LIS TOULON (Ferrari Glotin 2019) filtered the labels by several detector and clustering, yielding to 90 000 samples (6 Go). Another part (1 class) is from Dyni team recording in Mediterranean sea, from ASV Sphyna, at 384 kHz Fe, downsampled at 200 kHz Fe.  In sum, it yields around 11312 samples per class in the train set, and 2096 samples per class in the test set.

\begin{figure}[H]
    \centering
    \includegraphics[width=0.7\linewidth]{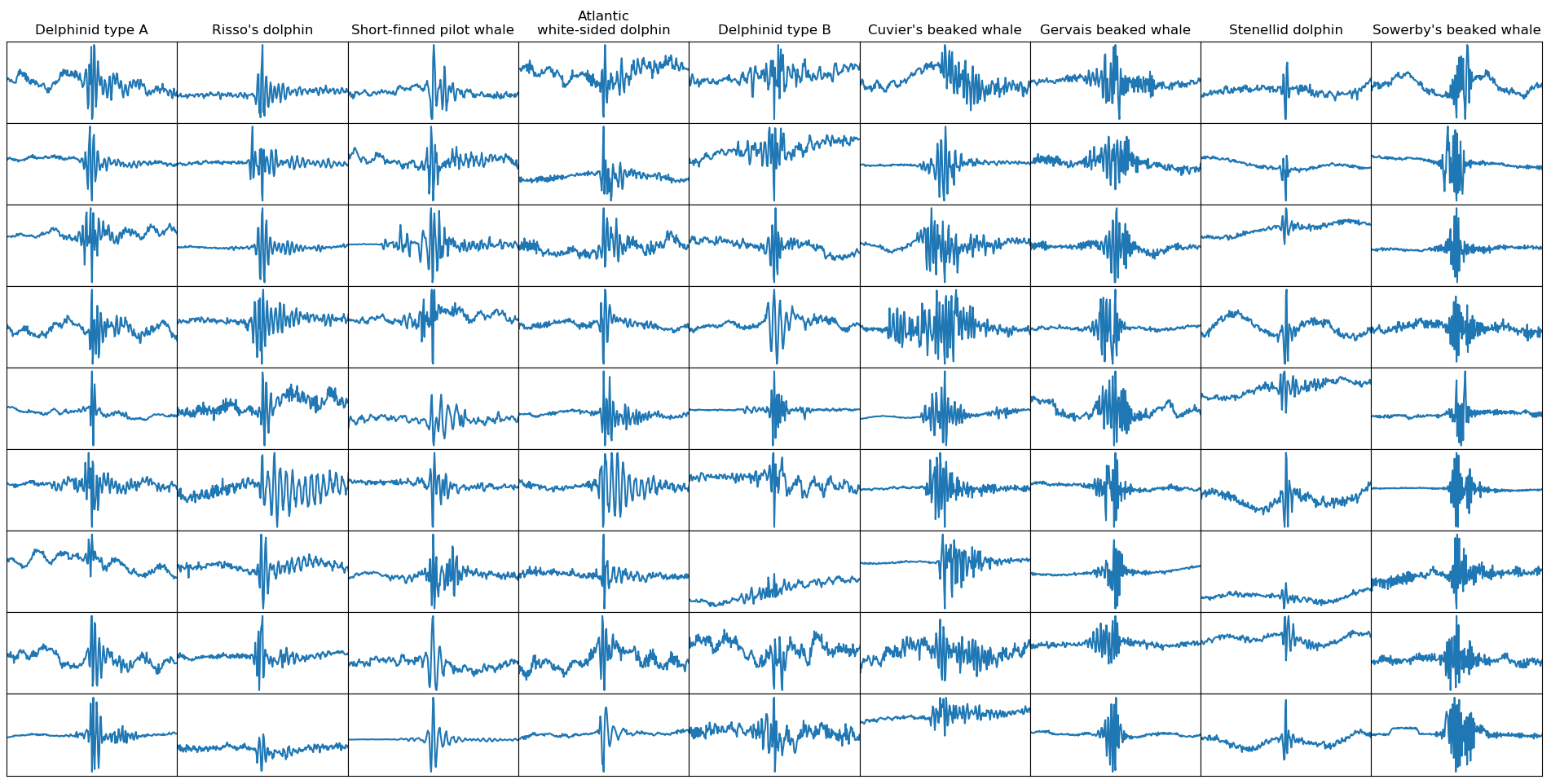}
    \caption{DOCC10 dataset samples, one class per column. The click is centered. Details in ENS DATA CHALLENGE WEB SITE.}
    \label{fig:extra2}
\end{figure}

\end{document}